\documentclass[12pt]{article}
\usepackage[utf8]{inputenc}
\usepackage[english]{babel}
\usepackage{amsmath}
\usepackage{amsthm}
\usepackage{amssymb}
\usepackage{hyperref}
\usepackage{braket}
\usepackage{color}
\usepackage{comment}
\usepackage{graphicx}
\usepackage[margin=2.5cm]{geometry}
\usepackage{authblk}
\usepackage{todonotes}
\usepackage[sort&compress,numbers]{natbib}
\usepackage{enumerate}

\usepackage{stackengine}
\usepackage[noabbrev]{cleveref}
\usepackage{mathtools}
\usepackage{mathrsfs}
\usepackage{bbm}

\newcommand{\epsi}{\varepsilon}
\newcommand{\E}{{\mathrm{e}}}
\newcommand{\D}{\mathrm{d}}
\newcommand{\I}{\mathrm{i}}
 \newcommand{\R}{ \mathbb{R} }
  \newcommand{\Sph}{ \mathbb{S} }

\newcommand{\N}{ \mathbb{N} }

\DeclareMathOperator{\spec}{spec}

\newcommand{\norm}[1]{\left\Vert #1 \right\Vert}
\newcommand{\abs}[1]{\left\vert #1 \right\vert}

\newcommand{\longip}[3]{\left\langle #1 \middle\vert #2 \middle\vert #3 \right\rangle}
\newcommand{\ud}{\,\textnormal{d}}
\let\eps\varepsilon

\allowdisplaybreaks
\theoremstyle{plain}
    \newtheorem{thm}{Theorem}
	\newtheorem{lem}[thm]{Lemma}
	\newtheorem{cor}[thm]{Corollary}
	\newtheorem{prop}[thm]{Proposition}
	
\theoremstyle{definition}
    \newtheorem{defi}[thm]{Definition}
    \newtheorem{rmk}[thm]{Remark}

\title{The BCS Energy Gap at High Density}
\author{Joscha Henheik\footnote{\href{mailto:joscha.henheik@ist.ac.at}{joscha.henheik@ist.ac.at}}~~and Asbj{\o}rn B\ae kgaard Lauritsen\footnote{\href{mailto:asbjornbaekgaard.lauritsen@ist.ac.at}{alaurits@ist.ac.at}} \\IST Austria, Am Campus 1, 3400 Klosterneuburg, Austria}

\begin{document}

\maketitle

\begin{abstract} 
We study the BCS energy gap $\Xi$ in the high--density limit and derive an asymptotic formula, 
which strongly depends on the strength of the interaction potential $V$ on the Fermi surface. 
In combination with the recent result by one of us (Math.~Phys.~Anal. Geom. 25, 2022) on the critical temperature $T_c$ at high densities, 
we prove the universality of the ratio of the energy gap and the critical temperature.
\\ \\
{
  \bfseries
  Keywords:
}
BCS theory, Energy gap, Superconductivity
\\ 
{
  \bfseries
  Mathematics subject classification: 
}
81Q10, 46N50, 82D55
\end{abstract}

\section{Introduction and Main Results}  
The Bardeen--Cooper--Schrieffer (BCS) theory \cite{bcs} (see \cite{hs15} for a review of recent rigorous mathematical work) 
has been an important theory of superconductivity since its conception. 
More recently, it has also gained attraction for describing the phenomenon of superfluidity in ultra cold fermionic gases, see \cite{bloch,chen} for reviews.
In either context, BCS theory is often formulated in terms of the BCS gap equation (at zero temperature)
\begin{equation}
\label{gapeq} \Delta(p) = -\frac{1}{(2\pi)^{3/2}} \int_{\R^3} \hat{V}(p-q) \frac{\Delta(q)}{E_{\Delta,\mu}(q)}  \D q\,,
\end{equation} 
where $E_{\Delta,\mu}(p) = \sqrt{(p^2-\mu)^2+ \vert \Delta(p) \vert^2}$.
At finite temperature $T > 0$ one replaces $E_{\Delta, \mu}$ by $E_{\Delta,\mu}/\tanh( E_{\Delta, \mu} / 2T)$.
The function $\Delta$ is interpreted as the order parameter describing the Cooper pairs (paired fermions). 
The interaction is local and given by the potential $V$, which we will assume satisfies $V \in L^1(\R^3)$, in which case it
has a Fourier transform given by $(\mathfrak{F}V)(p) = \hat{V}(p) = (2\pi)^{-3/2} \int_{\R^3} V(x) \E^{- \I p\cdot x} \ud x$.

The chemical potential $\mu$ controls the density of the fermions, and we investigate the high--density limit, i.e.~$\mu \to \infty$, here.
Recently this limit was studied by one of us \cite{henheikTc},
where an asymptotic formula for the critical temperature $T_c$ was found. 
 For temperatures $T$ below the critical temperature, $T < T_c$, the gap equation at temperature $T$ (\Cref{gapeq} with $E_{\Delta, \mu}$ replaced as prescribed) 
admits a non--trivial solution, for $T \ge T_c$ it does not. The critical temperature may equivalently be characterized by the existence of a negative eigenvalue of a certain linear operator, see \cite{hhss}.
Physically, a system at temperature $T$ is superconducting/--fluid if $T < T_c$, if $T \ge T_c$ it is not.

In this paper we study the energy gap (at zero temperature)
\begin{equation}
\Xi = \inf_p E_{\Delta,\mu}(p) = \inf_p \sqrt{(p^2-\mu)^2+ \vert \Delta(p)\vert^2}\,. \label{eq:energygap}
\end{equation}
The function $E_{\Delta,\mu}$ has the interpretation of the dispersion relation for the corresponding BCS Hamiltonian, 
and so $\Xi$ is indeed an energy gap (see Appendix A in \cite{hhss}). 
We show that, in the high--density limit, $\mu \to \infty$,
the ratio of the energy gap and the critical temperature tends to a universal constant independent of the interaction potential,
\begin{equation} \label{eq:universality}
\frac{\Xi}{T_c} \approx \frac{\pi}{\E^{\gamma}}\,,
\end{equation}
where $\gamma \approx 0.577$ denotes the Euler--Mascheroni constant. 
This universality is well--known in the physics literature, see, e.g., \cite{gorkov}, and was rigorously verified in the weak--coupling limit by Hainzl and Seiringer \cite{hs081}
and in the low--density limit, $\mu \to 0$, by one of us \cite{lauritsen} building on a work by Hainzl and Seiringer \cite{hs08}.
The general strategy for proving the universality in these limits has been to establish sufficiently good asymptotic formulas
for both, $T_c$ and $\Xi$, and compare them afterwards.

The weak--coupling limit is studied in \cite{hs081,fhns}, where one considers a potential $\lambda V$ 
for $V$ fixed and a small coupling constant $\lambda\to 0$.
In this limit, Hainzl and Seiringer \cite{hs081} have shown that the critical temperature and energy gap satisfies $T_c \sim A \exp(-B/\lambda)$ 
and $\Xi \sim C \exp(-B/\lambda)$ respectively for explicit constants $A,B,C > 0$ depending on the interaction potential $V$ and the chemical potential $\mu$. 
This limit exhibits the same universality and the ratio $C/A = \pi \E^{-\gamma}$ is independent of the interaction potential $V$
and the chemical potential~$\mu$.

The low--density limit $\mu \to 0$ is studied in \cite{hs08,lauritsen}.
In this limit Hainzl and Seiringer \cite{hs08} have shown that the critical temperature satisfies $T_c \sim \mu A \exp(-B/\sqrt{\mu})$
and one of us \cite{lauritsen} has shown that the energy gap satisfies $\Xi \sim \mu  C \exp(-B/\sqrt{\mu})$,
for some (different) explicit constants $A,B,C > 0$ depending on the interaction potential $V$. 
Also in this limit we have the same universality and the ratio $C/A = \pi \E^{-\gamma}$ is independent of the interaction potential $V$.  
These results together with the present paper thus show that the universality~\eqref{eq:universality}
holds in both, the low-- and high--density limit, as well as in the weak--coupling limit.

To show the universality, we prove in \Cref{thm:1} an asymptotic formula for the energy gap $\Xi$
in the high--density limit, similar to the corresponding formula for the critical temperature given in Theorem 7 in \cite{henheikTc}.
This formula, as well as the one given in Theorem \ref{thm:1}, depends strongly on the strength of the interaction potential $V$ on the Fermi sphere $\{p^2 = \mu\}$, which becomes weak due to the decay of $\hat V$ in momentum space. 
Together with the formula for the critical temperature \cite{henheikTc}
we prove the universality \eqref{eq:universality} in \Cref{cor:1}. All proofs are given in \Cref{sec:proofs}.
We now introduce some technical constructions and give the precise statements of our results.

\subsection{Preliminaries}
We will work with the formulation of BCS theory of 
\cite{hs08,hs081,fhns,hhss,hs15,henheikTc,lauritsen}. There 
one considers minimizers of the BCS functional (at zero temperature)
\begin{equation}\label{eqn.bcs.functional}
\mathcal{F}(\alpha)
 = 
    \frac{1}{2} \int_{\R^3} |p^2 - \mu| \left(1 - \sqrt{1 - 4|\hat \alpha(p)|^2}\right) \ud p + \int_{\R^3} V(x) |\alpha(x)|^2 \ud x\,.
\end{equation}
If $\alpha$ is a minimizer of this, then $\Delta = -2\widehat{V\alpha}$ 
satisfies the BCS gap equation \eqref{gapeq}. 
As discussed in \cite{hs081} the minimizer $\alpha$ is in general not necessarily unique, hence also $\Delta$ and $\Xi$ are not necessarily unique. 
However, since we  will assume that the interaction $V$ has non--positive Fourier transform, $\alpha$ and thus $\Xi$ is unique (see Lemma~2 in \cite{hs081}).

A crucial role for the investigation of the energy gap \eqref{eq:energygap} in the high--density limit 
is played by the (rescaled) operator  $\mathcal{V}_{\mu} : L^2(\mathbb{S}^{2}) \to L^2(\mathbb{S}^{2})$ 
measuring the strength of the interaction potential $\hat{V}$ on the Fermi surface. 
It is defined as
\begin{equation} \label{eq:vmu}
\left(\mathcal{V}_{\mu} u\right)(p) = \frac{1}{(2\pi)^{3/2}} \int_{\Sph^{2}} \hat{V}(\sqrt{\mu} (p-q)) u(q)\, \D\omega(q) \,,
\end{equation}
where $\D \omega$ denotes the uniform (Lebesgue) measure on the unit sphere $\Sph^2$.
The pointwise evaluation of $\hat{V}$ (and in particular on a $\mathrm{codim-}1$ submanifold) 
is well defined since $V \in L^1(\R^3)$. 
The condition that $V \in L^1(\R^3)$ could potentially be relaxed, see \cite{cueninmerz} and Remark~9 in \cite{henheikTc}.
The lowest eigenvalue of $\mathcal{V}_{\mu}$, which we denote by
\begin{equation*}
e_{\mu} = \mathrm{inf\, spec}\, \mathcal{V}_{\mu}
\end{equation*}
will be of particular importance. 
Note, that $\mathcal{V}_{\mu}$ is a trace--class operator (see the argument above Equation (3.2) in \cite{fhns}) with 
\begin{equation*}
\mathrm{tr}(\mathcal{V}_{\mu}) = \frac{1}{2\pi^2} \int_{\R^3} V(x) \D x = \sqrt{\frac{2}{\pi}} \, \hat{V}(0)\,.
\end{equation*} 
We will assume that $\hat V(0) < 0 $ in which case $e_{\mu} < 0$.
This corresponds to an attractive interaction between (some) electrons on the Fermi sphere.

In this work, we  restrict ourselves to the special case of radial potentials $V$,
where the spectrum of $\mathcal{V}_{\mu}$ can be determined more explicitly (see, e.g., Section~2.1 in~\cite{fhns}). 
Indeed, for radial $V$, 
the eigenfunctions of $\mathcal{V}_{\mu}$ are spherical harmonics and the corresponding eigenvalues are 
\begin{equation} \label{eq:eigenvalues}
\frac{1}{2\pi^2} \int_{\R^3} V(x) \left(j_\ell(\sqrt{\mu}\vert x \vert)\right)^2 \D x\,.
\end{equation}
The lowest eigenvalue $e_\mu$ is thus given by
\begin{equation*}
e_\mu = \frac{1}{2\pi^2} \, \inf_{\ell \in \mathbb{N}_0} \, \int_{\R^3} V(x) \left(j_\ell(\sqrt{\mu}\vert x \vert)\right)^2 \D x \,.
\end{equation*} 
Here, $j_\ell$ denotes the spherical Bessel function of order $\ell \in \mathbb{N}_0$.
Additionally, in case that $\hat{V} \le 0$, we have, by the Perron--Frobenius theorem, 
that the minimal eigenvalue is attained for the constant eigenfunction (i.e. with $\ell = 0$). Thus
\begin{equation} \label{eq:emuangmom0}
e_{\mu} = \frac{1}{2\pi^2} \int_{\R^3} V(x) \left(\frac{\sin(\sqrt{\mu}\vert x \vert)}{\sqrt{\mu}\vert x \vert}\right)^2 \D x\,.
\end{equation} 
For further discussions of the radiality assumption on $V$, see Remark~8 in \cite{henheikTc}.

In order to obtain an asymptotic formula for the energy gap that is valid up to second order (see \cite{hs081, henheikTc}), 
we define the operator $\mathcal{W}_\mu^{(\kappa)}$ on $u \in L^2(\Sph^2)$ via its quadratic form 
\begin{align} \label{eq:defWmu}
\big\langle u \big\vert\mathcal{W}_\mu^{(\kappa)} \big\vert u \big\rangle = \sqrt{\mu}\int_{0}^{\infty} \D \vert p \vert &\left(\frac{\vert p \vert^2}{\vert \vert p \vert^2 - 1 \vert} \left[\int_{\Sph^{2}}\D \omega(p) \left(\vert \hat{\varphi}(\sqrt{\mu} p) \vert^2 - \vert \hat{\varphi}(\sqrt{\mu} p/\vert p \vert) \vert^2\right)\right]\right. \nonumber \\
&+ \left. \frac{\vert p\vert^2}{\vert p\vert^2 + \kappa^2} \int_{\Sph^{2}} \D \omega(p) \vert \hat{\varphi}(\sqrt{\mu} p/\vert p \vert) \vert^2
\right)
\end{align}
for any fixed $\kappa \ge 0$ (cf.~Equation (10) in \cite{henheikTc} resp.~Equation (13) in \cite{hs081} for an analogous definition with $\kappa = 0$). 
Here $\hat{\varphi}(p) = (2\pi)^{-3/2} \int_{\Sph^{2}} \hat{V}(p-\sqrt{\mu}q) u(q) \D \omega(q)$, 
and $(\vert p\vert , \omega(p)) \in (0,\infty) \times \Sph^2$ denote spherical coordinates for $p \in \mathbb{R}^3$. 
To see that this operator is well--defined note that 
the map $\vert p \vert \mapsto \int_{\Sph^{2}} \D \omega(p) \vert \hat{\varphi}(p)\vert^2$ 
is Lipschitz continuous for any $u \in L^2(\Sph^2)$ since $V \in L^1(\R^3)$.
Hence the radial integral in \Cref{eq:defWmu} is well defined for $\vert p\vert \sim 1$. 
We will further assume that $V \in L^{3/2}(\R^3)$, in which case the integral is well--defined
for large $\vert p \vert$ as well.
We formulate our result in Theorem \ref{thm:1} only for $\kappa = 0$, but the case of a positive parameter $\kappa >0$ is crucial in the proof of this statement. For example, $\kappa >0$ ensures that the second term in the decomposition of the Birman--Schwinger operator associated with $E_{\Delta,\mu} + V$ is small (cf.~\Cref{bsdecomp}). Whenever it does not lead to confusion, we will refer to some $\kappa$--dependent quantity at $\kappa = 0$ by simply dropping the  $(\kappa)$--superscript.

We now define the operator
\begin{equation} \label{eq:Bmu}
\mathcal{B}_\mu^{(\kappa)} = \frac{\pi}{2}\left(\mathcal{V}_\mu - \mathcal{W}_\mu^{(\kappa)}\right)\,,
\end{equation}
which captures the strength of the interaction potential near the Fermi surface to second order and denote its lowest eigenvalue by 
\begin{equation} \label{eq:bmu}
b_\mu^{(\kappa)} = \mathrm{inf \, spec} \, \mathcal{B}_\mu^{(\kappa)}\,.
\end{equation}
The factor $\pi/2$ is introduced in \Cref{eq:Bmu} since for this scaling, 
the eigenvalue $b_\mu^{(\kappa)}$ has the interpretation of an effective scattering length in the case of small $\mu$ 
(see Proposition~1 in \cite{hs081}). 
Moreover, it was shown during the proof of Theorem 7 in \cite{henheikTc} that if $e_\mu <0$ then also $b_\mu^{(\kappa)}<0$ for $\mu$ large enough. 
This will also follow from \Cref{eq:Wmubound} in the proof below.

\subsection{Results}
The following definition characterizes the class of interaction potentials for which our asymptotic formula will hold. 
\begin{defi}[Admissible potentials]
\label{def:admissible}
Let $V \in L^1(\R^3) \cap L^{3/2}(\R^3)$ be a radial real--valued function with non--positive Fourier transform $\hat V \le 0$ and $\hat V(0) <0$.
 Denote
\begin{equation}
\label{eq:defsstar}
  s_\pm^* := \sup \left\{ s \geq 0 : |\cdot|^{-s} V_\pm \in L^1(\R^3)\right\},
  \qquad
  s^* := \min \{s_+^*, s_-^*\}\, ,
\end{equation}
where $V_\pm = \max\{ \pm V, 0\}$ are the positive and negative parts of $V$.
We say that $V$ is \emph{admissible} if the following is satisfied:
\begin{enumerate}[(a)]
\item 
There exists $a > 0$ such that
\[
  \sup \left\{ r \geq 0 : \lim_{\eps \to 0} \frac{1}{\eps^r}\int_{B_\eps} V_\pm(x) \ud x = 0\right\}
   = \sup \left\{ r \geq 0 : \lim_{\eps \to 0} \frac{1}{\eps^r} \int_{B_\eps} V_\pm|_{B_{a}}^*(x) \ud x = 0 \right\}\, ,
\]
where $V_\pm|_{B_{a}}^*$ 
denotes the symmetric decreasing rearrangement of $V_\pm|_{B_{a}}$, the restriction of $V_\pm$ to the ball of radius $a$ around $0$,
\item
if $|\cdot|^{-2}V \notin L^1(\R^3)$, we have $s^* = s_-^* < s^*_+$, and
\item 
$\vert \cdot \vert V \in L^2(\R^3)$ 
and $s^* > 7/5$.
\end{enumerate}
\end{defi}

\noindent
	As discussed around \Cref{eqn.bcs.functional}, the definiteness of the Fourier transform is needed for ensuring uniqueness of the energy gap $\Xi$.
Intuitively, the other criteria may be though as follows: Assumption~(a) captures that the strongest singularity 
of $V$ near the origin is in fact at the origin, assumption~(b) captures that $V$ is predominantly attractive, and assumption~(c) captures that $V$ is slightly less divergent at the origin, than allowed by the $L^{3/2}(\R^3)$-assumption. 
In view of assumption~(a), we remark that it is natural that the system is sensitive to the short range behavior of the interaction potential, since the interparticle distance as the physically relevant length scale that depends on the particle density tends to zero in the high--density limit. Furthermore, note that for $V \in L^1(\R^3) \cap L^{3/2}(\R^3)$, the condition $\vert \cdot \vert V \in L^2(\R^3)$ is mainly about regularity away from $0$ and infinity.

The most important examples of allowed interaction potentials include the cases of attractive Gaussian, Lorentzian and Yukawa potentials, also discussed in \cite{langmann}.
That is
\[
  V_{\textnormal{Gauss}}(x) = - (2\pi)^{-3/2}\E^{-x^2/2}\,, \quad V_{\textnormal{Lorentz}}(x) = - \frac{1}{\pi^2(1 + x^2)^2}\,,
  \quad V_{\textnormal{Yukawa}}(x) = - \frac{1}{4\pi|x|}\E^{-|x|}\,.
\]
\begin{rmk} \label{rmk:f(s)}
The proof of our main result formulated in Theorem \ref{thm:1} works without change if we assume $|\cdot|V\in L^r(\R^3)$ for some $2 \ge r > f(s^*)$
instead of $|\cdot|V\in L^2(\R^3)$, where $f$ is some complicated (explicit) expression, see the proof of \Cref{prop:4}.
We do not state the theorem with this slight generalization for simplicity.
We will however give the proof under this more general assumption for the purpose of illuminating where the assumption on $r=2$ comes from.
Additionally, to further illuminate where the conditions are used, all propositions and lemmas are stated with only the 
conditions needed on $V$ for that specific statement. 
(Beyond the conditions that $V\in L^1(\R^3)\cap L^{3/2}(\R^3)$ is real--valued, radial and has $\hat V \leq 0, \hat V(0) < 0$,
which is always assumed.)
\end{rmk}

\noindent
We can now state our main result for admissible interaction potentials. 
\begin{thm}\label{thm:1} Let $V$ be an admissible potential. Then the energy gap $\Xi$ is positive and satisfies
  \begin{equation} \label{eq:thm2}
  \lim\limits_{\mu \to \infty} \left(\log\frac{\mu}{\Xi} + \frac{\pi}{2\sqrt{\mu} b_\mu}\right) = 2- \log(8)\,.
  \end{equation} 
\end{thm}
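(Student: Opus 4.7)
The plan is to adapt the Birman--Schwinger strategy of \cite{hs081,henheikTc,lauritsen} from the $T_c$/low-density analyses to the zero-temperature gap equation in the high-density regime. Since $\hat V\le 0$ and $\hat V(0)<0$, the minimizer $\alpha$ of \Cref{eqn.bcs.functional} is unique, so $\Delta$ and $\Xi$ are uniquely determined; positivity $\Xi>0$ for large $\mu$ will follow once I establish $b_\mu^{(\kappa)}<0$, since this guarantees a negative eigenvalue of the linearized BCS operator and hence a non-trivial minimizer. Taking the Euler--Lagrange equation for $\mathcal{F}$ and invoking the Birman--Schwinger principle, the gap equation is equivalent to
\begin{equation*}
B_{\Delta,\mu}\,\psi = -\psi,\qquad B_{\Delta,\mu} := V^{1/2}\,\frac{1}{E_{\Delta,\mu}}\,|V|^{1/2},\quad V^{1/2}:=\sgn(V)|V|^{1/2},
\end{equation*}
so that the analysis of $\Xi$ reduces to extracting the leading asymptotic of the spectrum of $B_{\Delta,\mu}$.

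Next I would decompose $B_{\Delta,\mu}$ by introducing a regularized free dispersion $E^{(\kappa)}(p):=\sqrt{(p^2-\mu)^2+\kappa^2\,p^2/\mu}$ and writing $B_{\Delta,\mu} = \mathcal{R}^{(\kappa)}_{\Delta,\mu} + V^{1/2}(E^{(\kappa)})^{-1}|V|^{1/2}$, where $\mathcal{R}^{(\kappa)}_{\Delta,\mu}:=V^{1/2}\bigl(E_{\Delta,\mu}^{-1}-(E^{(\kappa)})^{-1}\bigr)|V|^{1/2}$ will be shown small in operator norm, exploiting $\Xi\ll\mu$ and the admissibility hypotheses. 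The ``free'' piece is then analyzed by separating angular and radial integrations in momentum space. The angular part, after pulling out the Fermi--sphere value of $\hat V$, couples to $\mathcal{V}_\mu$; subtracting its on-shell value and integrating the off-shell remainder produces exactly the operator $\mathcal{W}_\mu^{(\kappa)}$ of \Cref{eq:defWmu}---with the $\kappa$ regulator ensuring integrability of the second term at small $|p|$, as announced in the paragraph following that equation. The radial integration produces the logarithmic singularity via the standard identity
\begin{equation*}
\int_0^\infty \frac{q^2\,\mathrm{d}q}{\sqrt{(q^2-\mu)^2+\Xi^2}} = \sqrt{\mu}\,\bigl(\log(\mu/\Xi)+\log 8 - 2\bigr) + o(1),\qquad \Xi\ll \sqrt{\mu},
\end{equation*}
which supplies both the leading $\log(\mu/\Xi)$ and the explicit constant $2-\log 8$ in \Cref{eq:thm2}.

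Collecting the three pieces, the condition $-1\in\spec(B_{\Delta,\mu})$ becomes, to leading order in $\mu$,
\begin{equation*}
-1 = \frac{2\sqrt{\mu}}{\pi}\,b_\mu^{(\kappa)}\bigl(\log(\mu/\Xi)+\log 8-2\bigr) + o(1),
\end{equation*}
and solving for $\log(\mu/\Xi)$, together with $b_\mu^{(\kappa)}=b_\mu+o(b_\mu)$ as $\mu\to\infty$ (coming from the smallness of $\mathcal{W}_\mu^{(\kappa)}$ compared to $\mathcal{V}_\mu$), yields \Cref{eq:thm2}.

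The main obstacle I expect is that $\Delta$ is not given but determined self-consistently, so the replacement of $|\Delta|$ by $\Xi$ on the Fermi sphere in the asymptotic evaluation of $B_{\Delta,\mu}$ must be justified a posteriori by controlling the BS eigenvector $\psi$. I expect to show that $\psi$ is asymptotically concentrated on the ground state of $\mathcal{B}_\mu^{(\kappa)}$, so that $|\Delta(\sqrt{\mu}\,\omega)|\to\Xi$ uniformly in $\omega\in\Sph^2$. The admissibility conditions of \Cref{def:admissible} are tailored to this control: (a) captures the short-distance singularity of $V$---relevant because the interparticle length scale is $\mu^{-1/2}\to 0$; (b) ensures that $V$ is effectively attractive on the Fermi sphere; and (c), specifically $|\cdot|V\in L^2(\R^3)$, is used to bound the remainder $\mathcal{R}^{(\kappa)}_{\Delta,\mu}$ uniformly in $\mu$ (with $L^2$ arising as the threshold alluded to in \Cref{rmk:f(s)}).
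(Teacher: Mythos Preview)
Your overall strategy (Birman--Schwinger, extraction of $\mathcal{V}_\mu$ and $\mathcal{W}_\mu^{(\kappa)}$, the radial integral producing $\log(\mu/\Xi)+\log 8-2$, and the a posteriori identification $\Xi\approx\Delta(\sqrt\mu)$) is the right one and matches the paper. However, the specific decomposition you propose has a genuine gap.

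You write $B_{\Delta,\mu}=\mathcal{R}^{(\kappa)}_{\Delta,\mu}+V^{1/2}(E^{(\kappa)})^{-1}|V|^{1/2}$ with $E^{(\kappa)}(p)=\sqrt{(p^2-\mu)^2+\kappa^2p^2/\mu}$ and claim $\mathcal{R}^{(\kappa)}_{\Delta,\mu}$ is small in operator norm. This cannot be true: on the Fermi sphere $E_{\Delta,\mu}^{-1}(\sqrt\mu)=\Delta(\sqrt\mu)^{-1}$ while $(E^{(\kappa)})^{-1}(\sqrt\mu)=\kappa^{-1}$, and since $\Delta(\sqrt\mu)$ is exponentially small in $\mu$, the difference is exponentially \emph{large}. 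All of the $\Xi$-dependence that you need---the $\log(\mu/\Xi)$---lives precisely in $\mathcal{R}^{(\kappa)}_{\Delta,\mu}$, not in your ``free'' piece. Correspondingly, your radial identity involves $\Xi$, but the object you are integrating, $(E^{(\kappa)})^{-1}$, does not; its Fermi-surface integral would give $\log(\mu/\kappa)$ instead.

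The paper's decomposition goes the other way: one keeps the full $\Delta$-dependent kernel but splits off the \emph{Fermi-sphere rank-one part},
\[
B_{\Delta,\mu}=m_\mu^{(\kappa)}(\Delta)\,V^{1/2}\mathfrak{F}_\mu^\dagger\mathfrak{F}_\mu|V|^{1/2}+V^{1/2}M_{\Delta,\mu}^{(\kappa)}|V|^{1/2},
\quad
m_\mu^{(\kappa)}(\Delta)=\frac{1}{4\pi}\int_{\R^3}\Bigl(\frac{1}{E_{\Delta,\mu}}-\frac{1}{p^2+\kappa^2\mu}\Bigr)\D p.
\]
Here the divergent scalar $m_\mu^{(\kappa)}(\Delta)$ carries the $\log(\mu/\Delta(\sqrt\mu))$ (this is your ``standard identity'', but applied to $E_{\Delta,\mu}$ with the subtraction $1/(p^2+\kappa^2\mu)$, which contributes the $\kappa\pi/2$ term), and it is the \emph{off-shell} remainder $V^{1/2}M_{\Delta,\mu}^{(\kappa)}|V|^{1/2}$ that is shown to be $O(\mu^{-\beta^*+1/2+\delta})$ via the Bessel-function estimates of Lemmas~\ref{lem:3}--\ref{lem:4}. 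Expanding the resolvent of $1+B_{\Delta,\mu}$ then yields $m_\mu^{(\kappa)}(\Delta)/\sqrt\mu=-\pi/(2\sqrt\mu\,b_\mu^{(\kappa)})+o(1)$. A related issue is your passage from $b_\mu^{(\kappa)}$ to $b_\mu$: the relation is not $b_\mu^{(\kappa)}=b_\mu+o(b_\mu)$ in a way that is innocuous at the level of $1/(\sqrt\mu\,b_\mu)$; rather $\pi/(2\sqrt\mu\,b_\mu)=\pi/(2\sqrt\mu\,b_\mu^{(\kappa)})+\kappa\pi/2+o(1)$, and this extra $\kappa\pi/2$ must cancel against the one coming from the subtraction in $m_\mu^{(\kappa)}(\Delta)$. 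In your scheme no such $\kappa$-term appears in the radial integral, so the bookkeeping does not close.

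Your identification of the auxiliary ingredients---showing $\Xi=\Delta(\sqrt\mu)(1+o(1))$ via a Lipschitz bound on $\Delta$ (this is where $|\cdot|V\in L^2$ enters), and controlling $\Delta(p)$ in terms of $\Delta(\sqrt\mu)$ through the structure of the Birman--Schwinger eigenvector---is correct and is exactly what the paper does.
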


\noindent
In other words, 
\begin{equation*}
  \Xi = \mu \left(8\, \E^{-2} + o(1)\right)\exp\left( \frac{\pi}{2 \sqrt{\mu} b_\mu}\right)
\end{equation*}
in the limit $\mu \to \infty$. Similarly as for the critical temperature \cite{henheikTc}, this asymptotic formula is completely analogous to the weak--coupling case \cite{hs081} (replace $V \to \lambda V$ and take the limit $\lambda \to 0$) but we have coupling parameter $\lambda =1$ here. This similarity is not entirely surprising. From a physical perspective, only those fermions with momenta close to the Fermi surface $\{p^2 = \mu \}$ contribute to the superconductivity/--fluidity.
Thus, by the decay of the interaction $\hat{V}$ in Fourier space, the high--density limit, $\mu \to \infty$, is effectively a weak--coupling limit.

In order to deduce universality as in \Cref{eq:universality} in the high--density limit, 
we show that every admissible potential in the sense of Definition \ref{def:admissible} satisfies the imposed conditions for the proof of an analogous formula for the critical temperature. 
These conditions were formulated in Definition~5 in \cite{henheikTc}.
\begin{prop} \label{prop:admissible} 
Every admissible potential satisfies the conditions of Definition~5 in \cite{henheikTc}.
\end{prop}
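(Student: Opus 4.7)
The plan is to verify, condition by condition, that our Definition~\ref{def:admissible} implies Definition~5 in \cite{henheikTc}. Both definitions are built on the same preamble: $V \in L^1(\R^3) \cap L^{3/2}(\R^3)$ is real-valued and radial with $\hat V \le 0$ and $\hat V(0) < 0$. This part transfers verbatim. Likewise, the rearrangement condition (a), expressing that the strongest singularity of $V$ sits at the origin, and the attractive-dominance condition (b), distinguishing the case $|\cdot|^{-2}V \notin L^1(\R^3)$, are formulated in an identical way in both papers (they are exactly the conditions used to control the lowest eigenvalue of the Birman--Schwinger operator near the Fermi sphere), so these also transfer without change.

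The only quantitative conditions that differ are in part (c). Here we assume $|\cdot| V \in L^2(\R^3)$ and $s^* > 7/5$, whereas Definition~5 in \cite{henheikTc} imposes $|\cdot| V \in L^{r_0}(\R^3)$ for some $r_0$ with a possibly weaker lower bound $s^* > s_0$. For the $s^*$-bound, the value $7/5$ has been chosen precisely as (at least) the worse of the two thresholds, so $s^* > 7/5 \geq s_0$ is immediate. For the integrability, I would recover $|\cdot| V \in L^{r_0}(\R^3)$ from $|\cdot| V \in L^2(\R^3)$ by interpolation against the already assumed $V \in L^1(\R^3) \cap L^{3/2}(\R^3)$: on the ball $\{|x| \le 1\}$ one writes $|x|V(x) = |x| \cdot V(x)$ and uses $V \in L^{3/2}$ together with the boundedness of $|x|$, while on $\{|x| \ge 1\}$ one uses H\"older on the decomposition $|x|V = (|x|V)^{\theta}\,V^{1-\theta}$ with suitable $\theta \in (0,1)$ together with $|\cdot| V \in L^2$ and $V \in L^1$. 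This yields $|\cdot| V \in L^r(\R^3)$ for every $r$ in an interval containing $r_0$.

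The main obstacle is simply bookkeeping: one has to pin down the precise value of $r_0$ and $s_0$ appearing in Definition~5 of \cite{henheikTc}, choose the interpolation exponent $\theta$ accordingly, and check that the resulting window of admissible $r$ does cover $r_0$. No new analysis is required beyond this numerical matching, so the argument amounts to an explicit comparison of the two definitions and a short H\"older estimate.
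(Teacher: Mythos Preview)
Your proposal misidentifies where the nontrivial content lies. You assume that conditions (a) and (b) transfer verbatim and that the only work is in the quantitative condition (c), but this is backwards.

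First, Definition~5 in \cite{henheikTc} contains, in the case $|\cdot|^{-2}V \in L^1(\R^3)$, the \emph{additional} requirement $\int_{\R^3} V(x)/|x|^2\,\D x < 0$. This is not part of Definition~\ref{def:admissible}, and your assertion that the (b)-type conditions are ``formulated in an identical way'' is simply incorrect. The paper establishes this sign condition from $\hat V \le 0$ by a subharmonicity argument: since $-\Delta_p \widehat{V/|\cdot|^2}(p) = \hat V(p) \le 0$, the radial function $\widehat{V/|\cdot|^2}$ is subharmonic and tends to $0$ at infinity by Riemann--Lebesgue, so the maximum principle forces its value at the origin to be strictly negative. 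You have no substitute for this step.

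Second, Definition~5 in \cite{henheikTc} has a condition (d) asserting that the constant spherical harmonic is the (unique) ground state of $\mathcal{V}_\mu$; there is no counterpart in Definition~\ref{def:admissible}, and you do not mention it. The paper handles this by observing that $\hat V \le 0$ makes the Perron--Frobenius theorem applicable, which yields condition (d) automatically.

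By contrast, your proposed interpolation work on condition (c) appears to be unnecessary: the paper declares the remainder of the comparison trivial, so the conditions in (c) either coincide or the ones in \cite{henheikTc} are manifestly no stronger. The genuine obstacles are the two points above, neither of which your outline addresses.
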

\begin{proof}
By comparing the two definitions, the statement is trivial apart from the following two points. 
First, the additional requirement $\int_{\R^3} \frac{V(x)}{\vert x \vert^2} \D x < 0$ from Definition~5 in \cite{henheikTc} in the case $|\cdot|^{-2}V \in L^1(\R^3)$ is automatically fulfilled, 
since
\begin{equation*}
-\Delta_p \widehat{\frac{V}{\vert \cdot \vert^2}}(p) = \hat{V}(p) \le 0\,.
\end{equation*} 
That is, the radial function $\widehat{\frac{V}{\vert \cdot \vert^2}}$ is subharmonic and approaches $0$ as $\vert p \vert \to \infty$ (by the Riemann--Lebesgue Lemma), 
and thus by the maximum principle assumes a strictly negative value at~$0$. 
Second, since $\hat V \le 0$ and by application of the Perron--Frobenius Theorem, 
the constant spherical harmonic is the unique normalized ground state of $\mathcal{V}_\mu$ and thus condition (d) from 
Definition~5 in \cite{henheikTc} can be dropped.
\end{proof}

\noindent Therefore, by means of  Theorem~7 in \cite{henheikTc}, the critical temperature $T_c$ satisfies
\begin{equation*}
T_c = \mu \left(\frac{8}{\pi}\E^{\gamma-2} + o(1)\right)\exp\left(\frac{\pi}{2 \sqrt{\mu} b_\mu}\right)
\end{equation*}
for any admissible potential. Here $\gamma \approx 0.577$ is the Euler--Mascheroni constant. Together with Theorem \ref{thm:1}, this immediately proves the following.
\begin{cor}\label{cor:1} Let $V$ be an admissible potential. Then 
\begin{equation*}
\lim\limits_{\mu \to \infty} \frac{\Xi}{T_c} = \frac{\pi}{\E^\gamma} \approx 1.764\,.
\end{equation*}
\end{cor}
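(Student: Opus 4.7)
The plan is to combine the asymptotic formula for the energy gap $\Xi$ from \Cref{thm:1} with the corresponding asymptotic formula for the critical temperature $T_c$ proved in \cite{henheikTc}, and to observe that all $\mu$- and $V$-dependent factors — including the crucial exponential $\exp(\pi/(2\sqrt{\mu}\, b_\mu))$ which carries the entire sensitive dependence on the interaction on the Fermi surface — cancel identically in the ratio, leaving only a universal constant.

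First, I would invoke \Cref{prop:admissible} to ensure that an admissible potential in the sense of \Cref{def:admissible} satisfies the hypotheses required for the high--density asymptotics of the critical temperature in Theorem~7 of \cite{henheikTc}. This step is the bridge that allows both asymptotic formulas to be applied to the same class of potentials. Second, I would quote the two asymptotic expansions: \Cref{thm:1} directly yields
\begin{equation*}
\Xi \;=\; \mu\left(8\,\E^{-2} + o(1)\right) \exp\!\left(\frac{\pi}{2\sqrt{\mu}\, b_\mu}\right),
\end{equation*}
while Theorem~7 in \cite{henheikTc}, as reproduced just above the statement of \Cref{cor:1}, gives
\begin{equation*}
T_c \;=\; \mu\left(\tfrac{8}{\pi}\,\E^{\gamma-2} + o(1)\right) \exp\!\left(\frac{\pi}{2\sqrt{\mu}\, b_\mu}\right).
\end{equation*}

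Finally, I would form the quotient. The factors of $\mu$ cancel, and — critically — the two exponentials containing $b_\mu$ are identical, so they cancel exactly. Dividing the leading constants and passing to the limit $\mu \to \infty$ then produces
\begin{equation*}
\frac{\Xi}{T_c} \;=\; \frac{8\,\E^{-2} + o(1)}{\tfrac{8}{\pi}\,\E^{\gamma-2} + o(1)} \;\longrightarrow\; \frac{\pi}{\E^{\gamma}},
\end{equation*}
which is the asserted universal ratio.

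There is essentially no obstacle at this final stage: all the genuine difficulty has been discharged into the proof of \Cref{thm:1} (establishing the high--density asymptotic for $\Xi$ with exactly the same $b_\mu$ that appears in the formula for $T_c$) and into \Cref{prop:admissible} (matching the two admissibility hypotheses). The only point requiring care is that both remainders are genuinely $o(1)$ as $\mu \to \infty$, so that the limit of the ratio equals the ratio of the leading coefficients — a standard observation once both asymptotics are in hand.
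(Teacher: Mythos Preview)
Your proposal is correct and follows essentially the same approach as the paper: the corollary is stated as an immediate consequence of \Cref{thm:1} together with Theorem~7 of \cite{henheikTc}, the latter being applicable to admissible potentials by \Cref{prop:admissible}. The cancellation of the common factor $\mu\exp(\pi/(2\sqrt{\mu}\,b_\mu))$ in the ratio is exactly the point, and there is nothing further to add.
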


\noindent
This universality of the ratio between the energy gap and the critical temperature is well known in the physics literature (see, e.g., \cite{gorkov}) and has been previously established rigorously in the weak--coupling and low--density limits (see \cite{hs081} resp.~\cite{lauritsen}).

\section{Proofs} \label{sec:proofs}
As in the analysis of the critical temperature \cite{henheikTc} we introduce the parameter $\kappa >0$.
We have the following comparison of $b_\mu^{(\kappa)}$ with the $\kappa = 0$ quantity.
\begin{lem}[{\cite[Lemma~15]{henheikTc}}] \label{lem:2}
Let $V$ be admissible and $\kappa > 0$. 
In the limit of high density, $\mu \to \infty$, we have
\begin{equation*}
  \frac{\pi}{2\sqrt{\mu} b_\mu} = \frac{\pi}{2\sqrt{\mu} b_\mu^{(\kappa)}} +  \kappa \, \frac{\pi}{2} + o(1)\,.
\end{equation*}
\end{lem}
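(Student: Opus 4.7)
The plan is to derive an exact operator identity relating $\mathcal{B}_\mu^{(\kappa)}$ and $\mathcal{B}_\mu$, combine it with the variational principle to sandwich $b_\mu^{(\kappa)}-b_\mu$, and then invoke the leading--order asymptotic $b_\mu^{(\kappa)}\sim (\pi/2)\,e_\mu$ that emerges from the main proof of \Cref{thm:1} (mirroring the analogous argument for $T_c$ in \cite{henheikTc}).

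First, I compute $\mathcal{W}_\mu^{(\kappa)}-\mathcal{W}_\mu$ directly from \Cref{eq:defWmu}. Only the second summand in the integrand depends on $\kappa$, and $\tfrac{|p|^2}{|p|^2+\kappa^2}-1 = -\tfrac{\kappa^2}{|p|^2+\kappa^2}$ integrates to $-\pi\kappa/2$ over $|p|\in(0,\infty)$. Recognising that $\hat{\varphi}(\sqrt{\mu}\,p/|p|) = (\mathcal{V}_\mu u)(p/|p|)$, this yields $\langle u|\mathcal{W}_\mu^{(\kappa)}-\mathcal{W}_\mu|u\rangle = -\tfrac{\pi\kappa\sqrt{\mu}}{2}\|\mathcal{V}_\mu u\|_{L^2(\Sph^2)}^2$, i.e.\ the exact operator identity
\begin{equation*}
  \mathcal{B}_\mu \;=\; \mathcal{B}_\mu^{(\kappa)} - \frac{\pi^2\kappa\sqrt{\mu}}{4}\,\mathcal{V}_\mu^2 .
\end{equation*}
Testing against the normalised ground states $u_*^{(\kappa)}$ and $u_*^{(0)}$ of $\mathcal{B}_\mu^{(\kappa)}$ and $\mathcal{B}_\mu$ respectively, the variational principle then gives the two--sided bound
\begin{equation*}
  \frac{\pi^2\kappa\sqrt{\mu}}{4}\,\|\mathcal{V}_\mu u_*^{(\kappa)}\|^2 \;\le\; b_\mu^{(\kappa)} - b_\mu \;\le\; \frac{\pi^2\kappa\sqrt{\mu}}{4}\,\|\mathcal{V}_\mu u_*^{(0)}\|^2 .
\end{equation*}

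The next step is to pin down the two $\|\mathcal{V}_\mu u_*^{(\cdot)}\|^2$ factors. Because $\hat V\le 0$, the Perron--Frobenius argument around \Cref{eq:emuangmom0} gives $\mathcal{V}_\mu\le 0$ with $\|\mathcal{V}_\mu\|=|e_\mu|$ realised at the constant spherical harmonic $\Omega_0$; the upper bound $\|\mathcal{V}_\mu u_*^{(0)}\|^2\le e_\mu^2$ is therefore automatic. For the matching lower bound I will argue that $u_*^{(\kappa)}\to\Omega_0$ in $L^2(\Sph^2)$ as $\mu\to\infty$, so that $\|\mathcal{V}_\mu u_*^{(\kappa)}\|^2 = e_\mu^2(1+o(1))$. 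This concentration, together with the leading--order asymptotic $b_\mu^{(\kappa)} = (\pi/2)\,e_\mu\,(1+o(1))$ and the corresponding identity for $b_\mu$, rests on the bound already announced in the text as \Cref{eq:Wmubound}, which makes $\mathcal{W}_\mu^{(\kappa)}$ subleading with respect to $\mathcal{V}_\mu$. Substituting everything into
\begin{equation*}
  \frac{\pi}{2\sqrt{\mu}\, b_\mu} - \frac{\pi}{2\sqrt{\mu}\, b_\mu^{(\kappa)}} \;=\; \frac{\pi}{2\sqrt{\mu}}\cdot\frac{b_\mu^{(\kappa)}-b_\mu}{b_\mu\, b_\mu^{(\kappa)}} \;=\; \frac{\pi}{2\sqrt{\mu}}\cdot\frac{\tfrac{\pi^2\kappa\sqrt{\mu}}{4}\,e_\mu^2\,(1+o(1))}{\tfrac{\pi^2}{4}\,e_\mu^2\,(1+o(1))} \;=\; \frac{\pi\kappa}{2}+o(1)
\end{equation*}
yields the claim.

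The hard part is the concentration statement $u_*^{(\cdot)}\to\Omega_0$ and the accompanying leading--order asymptotic $b_\mu^{(\kappa)}\sim(\pi/2)e_\mu$: both require the expectation (and off--diagonal contribution) of $\mathcal{W}_\mu^{(\kappa)}$ to be small compared with the spectral gap of $\mathcal{V}_\mu$ above $e_\mu$, and that gap itself vanishes as $\mu\to\infty$. The sharpness needed for this comparison is precisely what the admissibility assumptions on $V$ buy; the same bound is what supplies the estimate \Cref{eq:Wmubound} used in the main proof of \Cref{thm:1}, in complete parallel to the corresponding step for $T_c$ in \cite{henheikTc}.
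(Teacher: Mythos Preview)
The paper's own proof is a one--line citation: it invokes Lemma~15 of \cite{henheikTc} and uses \Cref{prop:admissible} to check that admissible potentials fall under its hypotheses. Your attempt goes considerably further by actually reconstructing the argument.

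Your operator identity $\mathcal{B}_\mu = \mathcal{B}_\mu^{(\kappa)} - \tfrac{\pi^2\kappa\sqrt{\mu}}{4}\,\mathcal{V}_\mu^2$ is correct (the $|p|$--integral and the identification $\hat\varphi(\sqrt{\mu}\,\cdot)=\mathcal{V}_\mu u$ are exactly right), and the variational sandwich for $b_\mu^{(\kappa)}-b_\mu$ follows. The Perron--Frobenius observation that $\|\mathcal{V}_\mu\|=|e_\mu|$, giving the upper bound $\|\mathcal{V}_\mu u_*^{(0)}\|^2\le e_\mu^2$, is also correct. Once one has $\|\mathcal{V}_\mu u_*^{(\kappa)}\|^2=e_\mu^2(1+o(1))$ and $b_\mu,\,b_\mu^{(\kappa)}=\tfrac{\pi}{2}e_\mu(1+o(1))$, your final computation goes through.

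The genuine gap is the one you yourself flag in the last paragraph, and \eqref{eq:Wmubound} alone does not close it: that estimate controls only $\langle\Omega_0|\mathcal{W}_\mu^{(\kappa)}|\Omega_0\rangle$, not the operator norm or the off--diagonal matrix elements of $\mathcal{W}_\mu^{(\kappa)}$, so it cannot by itself force $u_*^{(\kappa)}\to\Omega_0$. What you do not exploit is that, by radiality of $V$, both $\mathcal{V}_\mu$ and $\mathcal{W}_\mu^{(\kappa)}$ are diagonal in the spherical--harmonic basis, so $u_*^{(\kappa)}$ is always an \emph{exact} spherical harmonic. The remaining issue is then purely to show that the minimising angular momentum is $\ell=0$ for large $\mu$; once that is known, $u_*^{(\kappa)}=\Omega_0$ exactly, your sandwich collapses to an equality $b_\mu^{(\kappa)}-b_\mu=\tfrac{\pi^2\kappa\sqrt{\mu}}{4}e_\mu^2$, and \eqref{eq:Wmubound} together with \Cref{lem:4} already yield $b_\mu^{(\kappa)}=\tfrac{\pi}{2}e_\mu(1+o(1))$. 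That $\ell=0$ selection is precisely the input one must import from \cite{henheikTc} (compare the discussion of condition~(d) in the proof of \Cref{prop:admissible}), so in the end your reconstruction, like the paper, rests on that reference for the decisive step.
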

\begin{proof} 
This is immediate from Lemma 15 in \cite{henheikTc} by invoking Proposition \ref{prop:admissible}. 
\end{proof}
\noindent
Now, one important ingredient in our proof is the asymptotic behavior of 
\begin{equation*}
m_{\mu}^{(\kappa)}(\Delta) = \frac{1}{4\pi} \int_{\R^3} \left(\frac{1}{E_{\Delta,\mu}(p)} - \frac{1}{p^2+\kappa^2\mu}\right)\D p 
\end{equation*}
for fixed $\kappa > 0$ (recall that $E_{\Delta,\mu}(p) = \sqrt{(p^2-\mu)^2+\vert \Delta(p)\vert^2}$).
This is similar to the strategy for the weak--coupling, low--density, and high--density limits of the critical temperature (see \cite{hs081, hs08, henheikTc}), 
and for the weak--coupling and low--density limits of the energy gap (see \cite{hs081, lauritsen}). 

\begin{lem} \label{lem:1}
Let $V$ be admissible and $\kappa > 0$. In the limit of high density, $\mu \to \infty$, we have
\begin{align*}
\Xi &= \Delta(\sqrt{\mu})(1+o(1))\,, \\[1.5mm]
m_\mu^{(\kappa)}(\Delta) &= \sqrt{\mu} \left( \log\frac{\mu}{\Delta(\sqrt{\mu})}-2+\kappa \, \frac{\pi}{2} + \log(8) + o(1) \right)\,,\\[1.5mm]
\frac{m_\mu^{(\kappa)}(\Delta)}{\sqrt{\mu}} &= - \frac{\pi}{2\sqrt{\mu}b_\mu^{(\kappa)}} + o(1)\,.
\end{align*}
\end{lem}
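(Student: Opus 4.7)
The plan is to prove the three statements in an order that exploits their logical dependencies. Claim~(1) localizes the energy gap to the Fermi sphere so that it can be identified with $\Delta(\sqrt{\mu})$; claim~(2) then evaluates $m_\mu^{(\kappa)}(\Delta)$ explicitly using this localization and essentially reduces the integral to its standard BCS form; and claim~(3) is the spectral translation of the gap equation through the Birman--Schwinger principle hinted at by \Cref{bsdecomp}, which couples $m_\mu^{(\kappa)}(\Delta)$ to $b_\mu^{(\kappa)}$. Throughout I will use that, since $V$ is radial with $\hat V\le 0$, Perron--Frobenius (as in the derivation of \eqref{eq:emuangmom0}) forces the minimizer $\alpha$ of \eqref{eqn.bcs.functional} and hence $\Delta=-2\widehat{V\alpha}$ to be radial, continuous on $\R^3$ (because $V\alpha\in L^1$), and vanishing at infinity (Riemann--Lebesgue), together with the a priori bound $\|\Delta\|_\infty \le 2\|V\|_1 \|\alpha\|_\infty \le \|V\|_1$.

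For claim~(1), the upper bound $\Xi\le |\Delta(\sqrt{\mu})|$ follows by inserting $|p|=\sqrt{\mu}$ into \eqref{eq:energygap}. For the lower bound, let $p_\ast$ realize (or approximately realize) the infimum and dichotomize on $||p_\ast|^2-\mu|$: if this quantity exceeds $|\Delta(\sqrt{\mu})|(1+\eps)$ then $E_{\Delta,\mu}(p_\ast)\ge ||p_\ast|^2-\mu|>|\Delta(\sqrt{\mu})|$ already, contradicting the upper bound; otherwise $||p_\ast|-\sqrt{\mu}|\lesssim |\Delta(\sqrt{\mu})|/\sqrt{\mu}$, which tends to zero since $|\Delta(\sqrt{\mu})|$ is bounded while $\sqrt{\mu}\to\infty$. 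Quantitative continuity of $\Delta$---for instance via the Lipschitz bound $|\Delta(p)-\Delta(q)|\le \tfrac{2}{(2\pi)^{3/2}}|p-q|\cdot\||\cdot|V\alpha\|_1$ with $\||\cdot|V\alpha\|_1\le \||\cdot|V\|_2\|\alpha\|_2$ controlled by admissibility assumption~(c) together with the a priori bound on $\|\alpha\|_2$ from $\mathcal{F}(\alpha)\le \mathcal{F}(0)=0$---then yields $|\Delta(p_\ast)|=(1+o(1))|\Delta(\sqrt{\mu})|$.

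For claim~(2), I would substitute radial coordinates, rescale $r=\sqrt{\mu}\,t$, and write the integrand as
\[
t^2\left(\frac{1}{\sqrt{(t^2-1)^2+|\Delta(\sqrt{\mu}\,t)|^2/\mu^2}}-\frac{1}{t^2+\kappa^2}\right).
\]
Since the leading logarithmic contribution is generated on a thin shell near $t=1$, the continuity of $\Delta$ (as in the previous paragraph) together with the decay of $\Delta$ away from the Fermi surface allows me to replace $|\Delta(\sqrt{\mu}\,t)|$ by $\Delta_0:=|\Delta(\sqrt{\mu})|$ at cost $o(\sqrt{\mu})$. With $\delta:=\Delta_0/\mu\to 0$, the remaining integral is the textbook BCS evaluation: the logarithmic singularity at $t=1$ produces $\sqrt{\mu}\log(\mu/\Delta_0)$, the away-from-the-sphere bulk contributes the constants $-2+\log 8$, and the $\kappa$-subtraction adds exactly $\kappa\pi/2$ (from $\int_0^\infty dt/(t^2+\kappa^2)=\pi/(2\kappa)$ combined with the appropriate $\kappa^2$ factors from completing squares).

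Claim~(3) is the main obstacle. The idea is to apply a Birman--Schwinger type reformulation to the gap equation \eqref{gapeq} after adding and subtracting the regulator $(q^2+\kappa^2\mu)^{-1}$ in the integrand, splitting
\[
\frac{1}{E_{\Delta,\mu}(q)}
=\underbrace{\left(\frac{1}{E_{\Delta,\mu}(q)}-\frac{1}{q^2+\kappa^2\mu}\right)}_{\text{peaked on }\{|q|=\sqrt{\mu}\}}+\frac{1}{q^2+\kappa^2\mu}.
\]
The first bracket, when integrated against the smooth symbol $\hat\varphi(q)=(2\pi)^{-3/2}\int\hat V(q-\sqrt{\mu}\omega)u(\omega)\,\D\omega(\omega)$ appearing in the definition of $\mathcal{W}_\mu^{(\kappa)}$, is shown---with $o(1)$ relative error---to act as the scalar $m_\mu^{(\kappa)}(\Delta)$ times the restriction to the Fermi sphere; the second bracket reproduces $-\mathcal{W}_\mu^{(\kappa)}$ up to $o(1)$ and assembles, together with $\mathcal{V}_\mu$ (which comes from reconstructing the full $\hat V$-convolution), into $\mathcal{B}_\mu^{(\kappa)}=\tfrac{\pi}{2}(\mathcal{V}_\mu-\mathcal{W}_\mu^{(\kappa)})$. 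Pairing the resulting identity for $\Delta\!\upharpoonright_{\sqrt{\mu}\Sph^2}$ against the ground state of $\mathcal{B}_\mu^{(\kappa)}$ and solving for $m_\mu^{(\kappa)}(\Delta)/\sqrt{\mu}$ produces the stated relation to $b_\mu^{(\kappa)}$. The hard part---and the reason for the rather strong admissibility conditions---is the uniform-in-$\mu$ control of the off-sphere and regularity remainders in this decomposition: assumptions (a)--(c) of \Cref{def:admissible}, in particular $|\cdot|V\in L^2(\R^3)$ and $s^*>7/5$, are precisely tailored to make these remainders $o(1)$ after multiplication by the large factor $\sqrt{\mu}$.
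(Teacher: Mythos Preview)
Your overall architecture matches the paper's, but there is a genuine gap in claim~(2) and a related ordering issue that undercuts the whole scheme.

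For claim~(2), replacing $|\Delta(\sqrt{\mu}\,t)|$ by $\Delta(\sqrt{\mu})$ near $t=1$ at cost $o(\sqrt{\mu})$ is precisely the hard step, and your justification (``continuity of $\Delta$\ldots\ together with the decay of $\Delta$ away from the Fermi surface'') does not suffice. Writing $x_\pm(s)=\Delta(\sqrt{\mu}\sqrt{1\pm s})/\mu$, the singular part of the integrand is $(s^2+x_\pm(s)^2)^{-1/2}$. A Lipschitz bound $|x_\pm(s)-x_\pm(0)|\le C\mu^{-\delta}s$ controls only the region $s>\rho$, contributing $O(\mu^{-\delta}|\log\rho|)$; for $s<\rho$ one needs a \emph{multiplicative} bound $x_\pm(s)\le C(1+\ldots)\,x_\pm(0)$, i.e.\ pointwise control of $\Delta(p)$ in terms of $\Delta(\sqrt{\mu})$. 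There is no a priori ``decay of $\Delta$ away from the Fermi surface'' to invoke here. In the paper this bound is \eqref{eqn.delta.leq.delta}, and it does \emph{not} come from continuity: it comes from the structural representation $\Delta(p)=f(\mu)[\hat\varphi(p)+\eta_\mu(p)]$ of \Cref{prop.delta.lip.weak.coupling}, which in turn is extracted from the Birman--Schwinger factorization and the operator bound \eqref{eq:Mbound}. Thus the Birman--Schwinger machinery underlying claim~(3) must be set up \emph{before} claim~(2) can be completed; your linear ordering (1)$\to$(2)$\to$(3) hides this dependency.

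A second, smaller gap concerns your a priori input. The inequality $\mathcal{F}(\alpha)\le\mathcal{F}(0)=0$ does not yield an $L^2$ bound on $\alpha$, because the kinetic weight $|p^2-\mu|$ vanishes on the Fermi sphere; the paper needs the nontrivial \Cref{lem:minimizer1} to obtain $\|\alpha\|_{L^2}\le C\mu^{7/20}$. This growth is why the Lipschitz constant in \Cref{prop.delta.lipschitz} is a positive power of $\mu$, and why the threshold $s^*>7/5$ in admissibility~(c) appears: it is exactly what is needed to absorb that power in the $s<\rho$ estimate above. (Also, your bound $\|\Delta\|_\infty\le 2\|V\|_1\|\alpha\|_\infty\le\|V\|_1$ confuses $\alpha$ with $\hat\alpha$; it is $\|\hat\alpha\|_\infty\le 1/2$ that holds, and $\|\Delta\|_\infty$ in fact grows like $\mu^{11/20}$.)
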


\noindent
These three asymptotic equalities are proven in Propositions \ref{prop.xi.equal.delta}, \ref{prop:4}, and \ref{prop:5} respectively.
\begin{proof}[Proof of Theorem \ref{thm:1}]
By Lemma \ref{lem:1} and Lemma \ref{lem:2} we get
\begin{align*}
\lim\limits_{\mu \to \infty} &\left(\log \frac{\mu}{\Xi} +  \frac{\pi}{2\sqrt{\mu}b_\mu}\right) = \lim\limits_{\mu \to \infty} \left(\log \frac{\mu}{\Delta(\sqrt{\mu})} + \frac{\pi}{2\sqrt{\mu}b_\mu}\right) \\[2mm] &= \lim\limits_{\mu \to \infty} \left(\log \frac{\mu}{\Delta(\sqrt{\mu})} + \frac{\pi}{2\sqrt{\mu}b_\mu^{(\kappa)}}\right) + \kappa\, \frac{\pi}{2}  = \lim\limits_{\mu \to \infty} \left(\log \frac{\mu}{\Delta(\sqrt{\mu})} - \frac{m_\mu^{(\kappa)}(\Delta)}{\sqrt{\mu}}\right) + \kappa \, \frac{\pi}{2} \\[2.5mm] &= \ 2-\kappa \, \frac{\pi}{2} + \log(8)+ \kappa \, \frac{\pi}{2} = 2-\log(8)\,,
\end{align*}
which yields \eqref{eq:thm2} and we have proven Theorem \ref{thm:1}. 
\end{proof}
\noindent
The rest of this paper is devoted to the proof of Lemma \ref{lem:1}.

\subsection{Proof of Lemma \ref{lem:1}}
As remarked, a key idea is to study the integral $m_\mu^{(\kappa)}(\Delta)$.
As in \cite{hs081,lauritsen} we first need some control of $\Delta$
in the form of a Lipschitz--like bound (given in \Cref{prop.delta.lipschitz})
and a bound controlling $\Delta(p)$ in terms of $\Delta(\sqrt{\mu}q)$ for $q\in \Sph^2$ 
(given in \Cref{eqn.delta.leq.delta}). 
First, we recall some properties (from \cite{hs081}) of the minimizer $\alpha$ 
of the BCS functional at zero temperature
\begin{equation} \label{eq:BCSfunctional2}
\mathcal{F}(\alpha)
 = 
    \frac{1}{2} \int_{\R^3}|p^2 - \mu| \left(1 - \sqrt{1 - 4|\hat \alpha(p)|^2}\right) \ud p + \int_{\R^3} V(x) |\alpha(x)|^2 \ud x\,.
\end{equation}
In \cite[Lemma 2]{hs081} it is shown that for potentials $V$ with non-positive Fourier transform
there exists a unique minimizer $\alpha$ with (strictly) positive Fourier transform. 
Moreover, for radial $V$ the BCS functional is invariant under rotations. Hence $\alpha$ and thus also $\Delta = -2\widehat{V\alpha}$ are radial functions.
Therefore, with a slight abuse of notation, we will write $\Delta(|p|)$ and mean $\Delta(p)$ for some (any) vector $p$.
(In general for any radial function $f$, we will write $f(|p|)$ for the value of $f(p)$.)
Additionally, since $\hat V\leq 0$ we have that $\Delta \geq 0$. 
In fact, by the BCS gap equation \eqref{gapeq}, we even have $\Delta > 0$, see Lemma 2 in \cite{hs081}.
Now, we give some \emph{a priori} bounds on the minimizer $\alpha$. The proofs of Lemma \ref{lem:minimizer1} and Lemma \ref{prop.delta.lipschitz} are given in Section \ref{sec:technical}. 

\begin{lem} \label{lem:minimizer1}
Let $\alpha$ be the minimizer of the BCS functional~\eqref{eq:BCSfunctional2}. Then for large $\mu$
\[
  \norm{\alpha}_{L^2}\leq C \mu^{7/20} \quad \text{and} \quad \norm{\alpha}_{H^1} \leq C \mu^{3/4}.
\]
\end{lem}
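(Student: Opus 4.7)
The plan is to exploit the minimization property $\mathcal{F}(\alpha)\le\mathcal{F}(0)=0$ together with the pointwise bound $|\hat\alpha|\le 1/2$ and the BCS gap equation $\Delta=-2\widehat{V\alpha}$. Applying the elementary inequality $1-\sqrt{1-y}\ge y/2$ (valid for $y\in[0,1]$) with $y=4|\hat\alpha|^2$ converts $\mathcal{F}(\alpha)\le 0$ into the fundamental \emph{a priori} estimate
\[
\int_{\R^3}|p^2-\mu|\,|\hat\alpha(p)|^2\,\ud p\;\le\; -\int_{\R^3}V(x)\,|\alpha(x)|^2\,\ud x,
\]
which serves as the starting point for both bounds.

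For the $H^1$ bound, I would combine the pointwise inequality $p^2\le|p^2-\mu|+\mu$, which gives
\[
\norm{\nabla\alpha}_{L^2}^2=\int_{\R^3}p^2|\hat\alpha|^2\,\ud p\;\le\;\int_{\R^3}|p^2-\mu|\,|\hat\alpha|^2\,\ud p+\mu\,\norm{\alpha}_{L^2}^2,
\]
with a KLMN--type bound on the potential: since $V\in L^{3/2}(\R^3)$, for every $\epsilon>0$ there is $C_\epsilon>0$ with $\abs{\int V\abs{\alpha}^2\,\ud x}\le\epsilon\,\norm{\nabla\alpha}_{L^2}^2+C_\epsilon\,\norm{\alpha}_{L^2}^2$. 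Taking $\epsilon=1/2$ and inserting the previous display yields $\norm{\nabla\alpha}_{L^2}^2\le C\mu\,\norm{\alpha}_{L^2}^2$ for $\mu$ large; together with the $L^2$ bound below this then delivers $\norm{\alpha}_{H^1}\le C\mu^{3/4}$.

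For the $L^2$ bound, I would split $\norm{\alpha}_{L^2}^2=\int|\hat\alpha|^2\,\ud p$ at a threshold $R$ into the Fermi shell $\{|p^2-\mu|\le R\}$ (on which $|\hat\alpha|\le 1/2$ and whose measure is $\lesssim \sqrt\mu\,R$) and its complement (on which Chebyshev yields $\int_{|p^2-\mu|>R}|\hat\alpha|^2\le R^{-1}\int|p^2-\mu|\,|\hat\alpha|^2\,\ud p$), obtaining
\[
\norm{\alpha}_{L^2}^2\;\le\; C\sqrt\mu\,R+\frac{1}{R}\int_{\R^3}|p^2-\mu|\,|\hat\alpha|^2\,\ud p.
\]
To reach the claimed exponent I would then sharpen the tail using the pointwise bound $|\hat\alpha(p)|\le\norm{\Delta}_\infty/(2|p^2-\mu|)$ valid for $|p^2-\mu|>\norm{\Delta}_\infty$, together with a bound on $\norm{\Delta}_\infty$ extracted from the gap equation by H\"older and Sobolev,
\[
\norm{\Delta}_\infty\;\le\; C\norm{V}_{L^{3/2}}\norm{\alpha}_{L^3}\;\le\; C\norm{V}_{L^{3/2}}\norm{\alpha}_{L^2}^{1/2}\norm{\nabla\alpha}_{L^2}^{1/2}.
\]
Plugging in the $H^1$ bound and optimizing over $R$ then gives $\norm{\alpha}_{L^2}\le C\mu^{7/20}$.

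The hard part will be extracting the sharp exponent $\mu^{7/20}$: a one--shot Chebyshev split combined only with the gradient bound yields merely the weaker estimate $\norm{\alpha}_{L^2}\lesssim \mu^{3/4}$, comparable to the $H^1$ bound and thus insufficient. Improving this requires exploiting how $\Delta$ (equivalently $\widehat{V\alpha}$) decays away from the Fermi sphere, which is where the full admissibility hypotheses on $V$, in particular $|\cdot|V\in L^2(\R^3)$ in combination with $V\in L^1(\R^3)\cap L^{3/2}(\R^3)$, enter the argument.
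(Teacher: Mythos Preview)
Your outline contains a genuine gap in the $L^2$ bound. Tracing your ``sharpened tail'' argument: the pointwise bound $|\hat\alpha(p)|\le\norm{\Delta}_\infty/(2|p^2-\mu|)$ yields $\int_{|p^2-\mu|>R}|\hat\alpha|^2\lesssim\norm{\Delta}_\infty^2\sqrt{\mu}/R$, so after optimizing $R\sim\norm{\Delta}_\infty$ you obtain $\norm{\alpha}_{L^2}^2\lesssim\sqrt{\mu}\,\norm{\Delta}_\infty$. Inserting $\norm{\Delta}_\infty\lesssim\norm{\alpha}_{L^2}^{1/2}\norm{\nabla\alpha}_{L^2}^{1/2}$ together with your relation $\norm{\nabla\alpha}_{L^2}\lesssim\sqrt{\mu}\,\norm{\alpha}_{L^2}$ gives $\norm{\alpha}_{L^2}^2\lesssim\mu^{3/4}\norm{\alpha}_{L^2}$, i.e.\ $\norm{\alpha}_{L^2}\lesssim\mu^{3/4}$ --- exactly the weak bound you flagged, with no improvement. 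The iteration has fixed point $3/4$, and appealing to decay of $\Delta$ off the Fermi sphere or to $|\cdot|V\in L^2$ does not help: any Lipschitz control on $\Delta$ goes through $\norm{|\cdot|V\alpha}_{L^1}$, which has the same structure and the same circularity. In fact the paper proves this lemma using only $V\in L^{3/2}(\R^3)$; the admissibility hypotheses play no role here.

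The paper's mechanism is different in two respects. First, instead of $\norm{\nabla\alpha}_{L^2}^2\le C\mu\norm{\alpha}_{L^2}^2$ it extracts the \emph{absolute} bound $\norm{\alpha}_{H^1}^2\le C\norm{\alpha}_{L^2}^2+C\mu^{3/2}$, the $\mu^{3/2}$ term arising from $\int[p^2/4-\mu]_-\,\ud p$ together with $|\hat\alpha|\le 1/2$; this absolute term is what eventually lets the system close. Second, for the $L^2$ estimate it does \emph{not} split near the Fermi sphere but at a low--momentum threshold $|p|=\mu^{\delta}$ with $\delta<1/2$ (so $|p^2-\mu|\gtrsim\mu$ on $\{|p|<\mu^\delta\}$), and decomposes the potential term $\iint\hat\alpha(p)\hat V(p-q)\hat\alpha(q)$ according to this split. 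Young's inequality with $\hat V\in L^3$ on the small--volume region $\{|p|<\mu^\delta\}$ then produces the relation $\norm{\alpha}_{L^2}^2\le C(\mu^{-2\delta}+\mu^{3\delta-2})\norm{\alpha}_{H^1}^2$; choosing $\delta=2/5$ gives $\norm{\alpha}_{L^2}\le C\mu^{-2/5}\norm{\alpha}_{H^1}$, which together with the first step yields $\norm{\alpha}_{H^1}\le C\mu^{3/4}$ and hence $\norm{\alpha}_{L^2}\le C\mu^{7/20}$. The gap equation and $\Delta$ are not used at all.
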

\noindent These estimates on the minimizer $\alpha$ now translate to bounds on $\Delta = -2\widehat{V\alpha} $.
\begin{lem}
\label{prop.delta.lipschitz}
Suppose $V\in L^{r}(\R^3)$ for some $6/5 \leq r \leq 2$. Define $\delta_r = \frac{3}{4} - \frac{6}{5r}$.
Then for sufficiently large $\mu$ we have
\[
  \norm{\Delta}_{L^\infty} \leq C \mu^{\frac{24-5r}{20r}} = C\mu^{\frac{1}{2} - \delta_r}.
\]
Similarly, if $|\cdot| V \in L^{r}(\R^3)$ then
\[
  |\Delta(p) - \Delta(q)| \leq C \mu^{\frac{24-5r}{20r}} ||p| -|q|| = C\mu^{\frac{1}{2} - \delta_r}||p|-|q||
\]
for all $p,q$.
In particular, if $r > 8/5$ then $\delta_r > 0$ and thus $1/2-\delta_r< 1/2$.
\end{lem}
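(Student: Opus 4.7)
The plan is to write $\Delta = -2\widehat{V\alpha}$ in Fourier representation and apply H\"older's inequality against the $L^r$-norm of $V$. The missing piece is an $L^{r'}$-bound on the minimizer $\alpha$, for the conjugate exponent $r' = r/(r-1)$, which I will obtain by interpolating between the two a priori estimates supplied by Lemma \ref{lem:minimizer1}.

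For the $L^\infty$-bound, since the Fourier transform maps $L^1$ to $L^\infty$, it suffices to estimate $\norm{V\alpha}_{L^1}\leq \norm{V}_{L^r}\norm{\alpha}_{L^{r'}}$. The Sobolev embedding $H^1(\R^3)\hookrightarrow L^6(\R^3)$ together with Lemma \ref{lem:minimizer1} gives $\norm{\alpha}_{L^6}\leq C\mu^{3/4}$, in addition to $\norm{\alpha}_{L^2}\leq C\mu^{7/20}$. Interpolating between $L^2$ and $L^6$ with parameter $\theta = (5r-6)/(2r)$, which lies in $[0,1]$ precisely when $6/5\leq r\leq 2$, then yields $\norm{\alpha}_{L^{r'}}\leq C\mu^{(15-8\theta)/20}$. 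Plugging in $\theta$ and simplifying, the exponent collapses to $(24-5r)/(20r) = 1/2-\delta_r$, which is the claimed bound on $\norm{\Delta}_{L^\infty}$.

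For the Lipschitz-type bound, I will exploit radiality: since $V$, $\alpha$, and $\Delta$ are all radial, I may assume without loss of generality that $p$ and $q$ are parallel vectors, so that $|p-q| = ||p|-|q||$. The elementary estimate $|\E^{-\I p \cdot x} - \E^{-\I q\cdot x}|\leq |p-q||x|$ then gives
\[
|\Delta(p)-\Delta(q)| \leq C \,||p|-|q||\, \int_{\R^3} |V(x)||x||\alpha(x)|\,\D x \leq C \,||p|-|q||\,\norm{|\cdot|V}_{L^r}\norm{\alpha}_{L^{r'}},
\]
after which the same interpolation argument as above with $|\cdot|V$ in place of $V$ closes the estimate.

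The main obstacle, such as it is, lies only in carefully tracking the interpolation exponents and verifying that the arithmetic collapses to the advertised $(24-5r)/(20r)$; no deeper ideas are needed beyond H\"older's inequality, the $H^1\hookrightarrow L^6$ Sobolev embedding, and the a priori bounds of Lemma \ref{lem:minimizer1}. The condition $r\geq 6/5$ corresponds precisely to $r'\leq 6$, so that $\alpha\in L^{r'}$ is controlled by the Sobolev embedding, while $r\leq 2$ ensures $r'\geq 2$ so that interpolation against $\norm{\alpha}_{L^2}$ is meaningful.
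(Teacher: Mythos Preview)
Your proof is correct and essentially identical to the paper's own argument: both bound $\norm{\Delta}_{L^\infty}\leq C\norm{V\alpha}_{L^1}\leq C\norm{V}_{L^r}\norm{\alpha}_{L^{r'}}$, interpolate $\norm{\alpha}_{L^{r'}}$ between $\norm{\alpha}_{L^2}$ and $\norm{\alpha}_{L^6}$ via Sobolev and Lemma~\ref{lem:minimizer1} (the paper's interpolation parameter $t=5/2-3/r$ is precisely your $\theta=(5r-6)/(2r)$), and for the Lipschitz bound both exploit radiality together with $|\E^{-\I p\cdot x}-\E^{-\I q\cdot x}|\leq |p-q||x|$ before applying H\"older with $|\cdot|V\in L^r$.
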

\noindent
We will use the first bound as $\Vert \Delta \Vert_{L^\infty} \le C \mu^{11/20} = o(\mu)$ for $r = 3/2$, and the second bound as $\vert \Delta(p) - \Delta(q) \vert \le C \mu^{7/20} \vert \vert p\vert - \vert q \vert \vert$ for $r=2$.

Armed with these {\it a priori} bounds on $\Delta$, we can now prove the asymptotic formulas in Lemma~\ref{lem:1} and start with the first one. 

\begin{prop}\label{prop.xi.equal.delta}
Suppose $|\cdot| V \in L^r({\R^3})$ for $r > 8/5$. Then $\Xi = \Delta(\sqrt{\mu}) (1 + o(1))$.
\end{prop}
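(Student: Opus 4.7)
The plan is to establish $\Xi = \Delta(\sqrt\mu)(1+o(1))$ by proving matching upper and lower bounds. The upper bound is immediate: since $\Delta$ is radial and $(p^2-\mu)^2$ vanishes for $|p|=\sqrt\mu$, evaluating at any point on the Fermi sphere gives
\begin{equation*}
\Xi \;\le\; E_{\Delta,\mu}(\sqrt\mu) \;=\; \Delta(\sqrt\mu).
\end{equation*}
Since $\hat V\le 0$, the gap equation~\eqref{gapeq} forces $\Delta > 0$ pointwise (Lemma~2 of \cite{hs081}), so in particular $\Delta(\sqrt\mu) > 0$, and the question reduces to showing a matching lower bound.

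For the lower bound I would parametrise by $u = |p|-\sqrt\mu$ and use the Lipschitz estimate from \Cref{prop.delta.lipschitz} with $r=2$, writing $L := C\mu^{1/2-\delta_2} = C\mu^{7/20}$. Then
\begin{equation*}
\Delta(p)\;\ge\; \bigl(\Delta(\sqrt\mu) - L|u|\bigr)_+\,, \qquad |p^2-\mu|\;=\;|u|\bigl(|p|+\sqrt\mu\bigr)\;\ge\; |u|\sqrt\mu,
\end{equation*}
and hence $E_{\Delta,\mu}(p)^2 \ge \mu u^2 + (\Delta(\sqrt\mu) - L|u|)_+^2$. The right-hand side depends only on $|u|$, and I would minimise it by splitting into the two regions $L|u| \ge \Delta(\sqrt\mu)$ and $L|u| < \Delta(\sqrt\mu)$.

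In the first region the quadratic $\mu u^2$ alone dominates: $E^2 \ge \mu u^2 \ge \mu\,\Delta(\sqrt\mu)^2/L^2 = \Delta(\sqrt\mu)^2\,\mu^{2\delta_2}/C^2$, which is $\Delta(\sqrt\mu)^2$ times a diverging factor, so certainly $E^2 \ge \Delta(\sqrt\mu)^2(1-o(1))$. In the second region the function $(\mu+L^2)u^2 - 2\Delta(\sqrt\mu)L|u| + \Delta(\sqrt\mu)^2$ is minimised at $|u| = \Delta(\sqrt\mu)L/(\mu+L^2)$, giving
\begin{equation*}
E_{\Delta,\mu}(p)^2 \;\ge\; \Delta(\sqrt\mu)^2\,\frac{\mu}{\mu+L^2} \;=\; \Delta(\sqrt\mu)^2\bigl(1-o(1)\bigr),
\end{equation*}
since $L^2/\mu = C^2\mu^{-2\delta_2}\to 0$ as $\mu\to\infty$ (this is exactly where $\delta_2 > 0$, i.e.\ $r=2>8/5$, is needed). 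Combining the two regions yields $\Xi \ge \Delta(\sqrt\mu)(1-o(1))$, which together with the trivial upper bound proves the proposition.

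I do not expect a serious obstacle here; the whole argument is a one-variable minimisation once the Lipschitz bound from \Cref{prop.delta.lipschitz} is available. The only point worth being careful about is that the Lipschitz bound degrades as $\mu\to\infty$ (the constant grows like $\mu^{1/2-\delta_2}$), and we genuinely need $\delta_2>0$ so that the gradient of $\Delta$ is parametrically smaller than the square-root curvature of the kinetic term $(p^2-\mu)^2$ near the Fermi sphere. This is the reason the proposition is stated under the assumption $|\cdot|V\in L^r$ with $r>8/5$ rather than merely $r\ge 6/5$.
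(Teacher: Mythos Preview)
Your proof is correct and rests on the same key ingredient as the paper --- the Lipschitz bound of \Cref{prop.delta.lipschitz} together with $\delta_r>0$. One small remark: you specialise to $r=2$, but the proposition is stated for arbitrary $r>8/5$; your argument goes through verbatim with $L=C\mu^{1/2-\delta_r}$ for the given $r$, so this is only a cosmetic restriction.

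The paper's presentation is a bit more economical than your explicit minimisation. Instead of bounding $E_{\Delta,\mu}(p)^2$ from below and optimising in $u$, it simply observes that the infimum defining $\Xi$ can only be approached at points with $|p^2-\mu|\le \Xi\le \Delta(\sqrt\mu)$, whence
\[
\bigl||p|-\sqrt\mu\bigr|\;\le\;\frac{\Delta(\sqrt\mu)}{|p|+\sqrt\mu}\;\le\;\frac{\Delta(\sqrt\mu)}{\sqrt\mu}\,,
\]
and then the Lipschitz bound gives $|\Delta(p)-\Delta(\sqrt\mu)|\le C\mu^{-\delta_r}\Delta(\sqrt\mu)$ directly, so $\Xi\ge|\Delta(p)|\ge\Delta(\sqrt\mu)(1-o(1))$. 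This self-referential localisation sidesteps your case split and quadratic minimisation, but the underlying mechanism --- the Lipschitz constant of $\Delta$ being $o(\sqrt\mu)$ so that the curvature of $(p^2-\mu)^2$ dominates near the Fermi sphere --- is exactly what you identified.
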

\begin{proof}
Clearly $\Xi = \inf \sqrt{|p^2 - \mu|^2 + |\Delta(p)|^2}\leq \Delta(\sqrt{\mu})$.
Take now $p$ with $|p^2- \mu| \leq \Xi \leq \Delta(\sqrt{\mu})$. 
Then
\[
  |\Delta(p) - \Delta(\sqrt{\mu})| \leq C \mu^{1/2 -\delta_r} ||p|-\sqrt{\mu}|
    \leq C \mu^{1/2 - \delta_r} \frac{\Delta(\sqrt{\mu})}{|p| + \sqrt{\mu}}
    \leq C \mu^{-\delta_r} \Delta(\sqrt{\mu})
\]
where $\delta_r >0$ by assumption. Hence, $\Delta(p) = \Delta(\sqrt{\mu})(1 + o(1))$ for any such $p$ and we conclude the desired.
\end{proof}
\noindent
 The proofs of the second and third equality (Proposition \ref{prop:4} and Proposition \ref{prop:5}, respectively) heavily use Lemma \ref{lem:3} and Lemma \ref{lem:4}, which we import from \cite{henheikTc}. 
Lemma \ref{lem:3} provides an upper bound for integrals of the potential against spherical Bessel functions $j_\ell$, uniformly in $\ell \in \mathbb{N}_0$. These naturally arise by the spherical symmetry of $V$ (cf.~\Cref{eq:eigenvalues}). 

\begin{lem}{\rm (\cite[Lemma 12]{henheikTc})}\label{lem:3}
Let $V \in L^1(\R^3) \cap L^{3/2}(\R^3)$ and
assume that $s^* >1$, with $s^*$ as in Definition \ref{def:admissible}. Set
\[
  \beta^* 
    = \begin{cases}
      \frac{s^*}{2}  
      & \text{for} \quad s^* \in (1,5/3] 
      \\
      \min\left( \frac{4s^*-4}{9s^*-7}+\frac{1}{2}, \, \frac{19}{22}\right)  \quad 
      & \text{for} \quad s^* >5/3 \,.
  \end{cases}
\]
Note that $\beta^*$ depends continuously on $s^*$ and is (strictly) monotonically increasing (between $1$ and $2$), 
and $\beta^* \le \min(s^*,2)/2$ for any $s^*>1$. Then for any $\delta >0$ there exists an $\epsi_0 >0$ such that for all $\epsi \in [0,\epsi_0]$ we have
\begin{equation*}
\limsup\limits_{\mu \to \infty} \mu^{\beta^* - \delta} \sup_{\ell \in \mathbb{N}_0}\int_{\R^3} \D x \vert V (x)\vert \vert j_\ell(\sqrt{\mu}\vert x \vert )\vert^{2-\varepsilon} = 0\,.
\end{equation*}
\end{lem}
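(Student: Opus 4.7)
The plan is to rewrite the integral in polar coordinates and bound the resulting radial integral
\[
I_\ell(\mu) = \int_0^\infty |V|(r) \, r^2 \, |j_\ell(\sqrt{\mu}\, r)|^{2-\epsi} \, \D r
\]
uniformly in $\ell$, via a decomposition of $r$ around the natural scale $1/\sqrt{\mu}$ together with pointwise bounds on the spherical Bessel functions that are uniform in the order. The three main estimates I would rely on are: the far-zone bound $|j_\ell(t)| \leq C/t$ for $t \gtrsim \ell$ coming from the asymptotic $j_\ell(t) \sim t^{-1}\sin(t-\ell\pi/2)$; the uniform turning-point bound $|j_\ell(t)| \leq C(2\ell+1)^{-1/3}$ from Olver's uniform asymptotic expansion; and, for small arguments, the series bound $|j_\ell(t)|\leq C(t/(2\ell+1))^\ell$. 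Combining these already yields an $\ell$-independent envelope $|j_\ell(t)|^{2-\epsi} \lesssim t^{-(2-\epsi)}$ outside an immediate neighbourhood of the turning line $t\sim \ell$, and this envelope will drive most of the argument.

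Next I would split the $x$-integral into a short-range piece $\{|x|\leq a\}$ and a long-range piece $\{|x|>a\}$. On the long-range piece, $V\in L^1(\R^3)$ together with $|j_\ell(\sqrt{\mu}r)|^{2-\epsi} \lesssim (\sqrt{\mu}r)^{-(2-\epsi)}$ gives decay of order $\mu^{-(2-\epsi)/2}$, which is much better than any $\mu^{-\beta^* +\delta}$. On the short-range piece, assumption~(a) of Definition~\ref{def:admissible} and the definition of $s^*$ supply, for any $s<s^*$ and small $R>0$, the concentration bound $\int_{B_R}|V|(x)\,\D x \leq C_s R^s$. When $s^* \in (1, 5/3]$ this alone suffices: estimating $|j_\ell|\leq 1$ on $|x|\leq \mu^{-1/2}$ and using the far-zone bound outside produces the decay $\mu^{-s/2}$, explaining the first branch $\beta^* = s^*/2$. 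For $s^* > 5/3$ this simple argument loses its edge because the transition region $|x|\sim 1/\sqrt{\mu}$ begins to dominate; here I would bring in $V\in L^{3/2}(\R^3)$ through a H\"older estimate of the form $\int_A |V|\,|j_\ell|^{2-\epsi}\,\D x \leq \|V\|_{L^{3/2}(A)} \,\bigl\||j_\ell(\sqrt{\mu}\cdot)|^{2-\epsi}\bigr\|_{L^3(A)}$ on an appropriate intermediate annulus, and optimize the cut set together with the H\"older split against the $(2\ell+1)^{-1/3}$-smallness at the turning line. This optimization should deliver the rational expression $\tfrac{4s^* - 4}{9s^* - 7} + \tfrac{1}{2}$; the cap at $19/22$ appears precisely when the optimum becomes saturated by the Bessel factor rather than by $s^*$, after which further increase of $s^*$ no longer helps.

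The main obstacle will be the book-keeping in the second regime: one must position the inner cut-off $R_\mu$ and the H\"older exponents so that three sources of smallness (short-range concentration of $V$ via $s^*$, the turning-line bound $(2\ell+1)^{-1/3}$, and the three-way H\"older split forced by $L^{3/2}$) balance simultaneously, and then verify that the resulting estimate is genuinely uniform in $\ell$ by cleanly tracking where $\ell$ enters the far-zone versus the transition region. A useful consistency check is that both branches give $\beta^* = 5/6$ at $s^* = 5/3$. Once the optimization is done, the loss $\delta>0$ in $\mu^{\beta^*-\delta}$ comes directly from the slack $s<s^*$ in the inner bound, and the smallness of $\epsi_0$ is only needed to keep $t^{-(2-\epsi)}$ integrable at the scales that matter.
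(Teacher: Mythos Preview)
The paper does not give its own proof of this lemma: it is imported verbatim as \cite[Lemma~12]{henheikTc}, and no argument is supplied here beyond the citation. So there is nothing in this paper to compare your sketch against, and any genuine comparison would have to be with the proof in \cite{henheikTc}.

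As a stand--alone outline your plan is headed in the right direction --- the relevant ingredients really are uniform pointwise bounds on $j_\ell$ (the elementary bound $|j_\ell|\le 1$, the Landau bound $t^{5/6}|j_\ell(t)|\le C$ used repeatedly in this paper, and turning--point/Airy information), a short--range/long--range split in $x$, and the $L^{3/2}$ assumption via H\"older in the transition region. One point deserves more care, though: your claimed envelope $|j_\ell(t)|^{2-\epsi}\lesssim t^{-(2-\epsi)}$ is \emph{not} uniform in $\ell$, since near the turning line $t\sim \ell$ one only has $|j_\ell(t)|\lesssim t^{-1/3}$ (and the genuinely uniform bound is Landau's $|j_\ell(t)|\lesssim t^{-5/6}$). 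In particular, your ``long--range piece'' estimate $\mu^{-(2-\epsi)/2}$ tacitly assumes the turning line does not enter the region $|x|>a$, which fails once $\ell\gtrsim a\sqrt{\mu}$; using only Landau there yields $\mu^{-5(2-\epsi)/12}\sim\mu^{-5/6}$, which is just short of the top value $\beta^*=19/22$. To reach the stated $\beta^*$ uniformly in $\ell$ you will need to treat the turning region explicitly (its width $\sim\ell^{1/3}$ in $t$) also in the long--range zone, rather than relying on a single $\ell$--free envelope.
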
   

\noindent
Lemma \ref{lem:4} gives a lower bound on the quantity $e_\mu$ that measures the strength of the interaction potential on the Fermi surface (see \Cref{eq:emuangmom0}). 
\begin{lem}[{\cite[Lemma 13]{henheikTc}}]
\label{lem:4}
	Let $V$ be an admissible potential (cf.~Definition \ref{def:admissible}).  Then for any $\delta >0$ there exists $c_\delta>0$ such that 
	\begin{equation*}
	\liminf\limits_{\mu \to \infty} \vert \mu^{\min(s^*+ \delta ,2)/2} \, e_\mu \vert \ge c_\delta \,.
	\end{equation*} 
\end{lem}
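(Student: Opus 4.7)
The plan is to split into two regimes depending on whether $|\cdot|^{-2}V\in L^1(\R^3)$, mirroring the threshold appearing in assumption~(b) of \Cref{def:admissible}. Throughout I would work directly with \Cref{eq:emuangmom0}, which expresses $e_\mu$ as an explicit radial integral against $j_0(\sqrt\mu |x|)^2 = \sin^2(\sqrt\mu|x|)/(\mu|x|^2)$.

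In the \emph{regular regime} $|\cdot|^{-2}V\in L^1(\R^3)$ we have $s^*\ge 2$, so $\min(s^*+\delta,2)/2=1$ and the claim reduces to $\liminf_{\mu\to\infty}\mu|e_\mu|>0$. Multiplying \Cref{eq:emuangmom0} by $\mu$, using $\sin^2 = (1-\cos(2\cdot))/2$ and applying Riemann--Lebesgue to $V/|\cdot|^2\in L^1(\R^3)$ yields
\[
\lim_{\mu\to\infty}\mu\,e_\mu \;=\; \frac{1}{4\pi^2}\int_{\R^3}\frac{V(x)}{|x|^2}\,\D x.
\]
This limit is strictly negative by the same subharmonicity/strong-maximum-principle argument already used in the proof of \Cref{prop:admissible}: the radial function $\widehat{V/|\cdot|^2}$ is subharmonic, vanishes at infinity, and satisfies $-\Delta (\widehat{V/|\cdot|^2})=\hat V$ with $\hat V(0)<0$, forcing a strictly negative value at the origin.

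In the \emph{singular regime} $|\cdot|^{-2}V\notin L^1(\R^3)$ we have $s^*<2$, and assumption~(b) gives $s^*=s_-^*<s_+^*$, so the negative part of $V$ is strictly more singular at the origin than the positive part. I would split $V=V_+-V_-$ and accordingly $e_\mu=e_\mu^+-e_\mu^-$ with $e_\mu^\pm=\frac{1}{2\pi^2}\int V_\pm(x)\,j_0(\sqrt\mu\,|x|)^2\,\D x\ge 0$, and show that $e_\mu^-$ beats $e_\mu^+$ by a power of $\mu$. For the upper bound on $e_\mu^+$ I would apply \Cref{lem:3} to the potential $V_+$, whose threshold $s_+^*$ lies strictly above $s^*$, giving $e_\mu^+\le C_\delta\,\mu^{-\beta_+^*+\delta/4}$ with $\beta_+^*>s^*/2$. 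For the lower bound on $e_\mu^-$ I would localize to $|x|\le c_0/\sqrt\mu$, where $j_0(\sqrt\mu|x|)^2\ge 1/2$, reducing the task to a quantitative lower bound $\int_{B_{c_0/\sqrt\mu}}V_-(x)\,\D x\gtrsim \mu^{-s^*/2-\delta/4}$. Combining these via $|e_\mu|\ge e_\mu^- - e_\mu^+$, and choosing the auxiliary parameter smaller than $s_+^*-s^*$, then delivers the claim.

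The main obstacle is precisely this pointwise lower bound on $\int_{B_\varepsilon}V_-(x)\,\D x$. The definition of $s_-^*$ as a supremum of integrability exponents only guarantees such behaviour along \emph{some} subsequence of radii $\varepsilon_n\to 0$, not for every small $\varepsilon$. This is the single place where admissibility~(a) is essential: passing to the symmetric decreasing rearrangement $V_-|_{B_a}^*$ makes the small-ball integral a monotone function of $\varepsilon$ with a monotone radial density, so the sup-definition can be upgraded to a genuine pointwise lower bound $\int_{B_\varepsilon}V_-|_{B_a}^*\,\D x\gtrsim \varepsilon^{s^*+\delta/2}$ for all small $\varepsilon$ via a short layer-cake argument exploiting monotonicity of the rearranged profile. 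Assumption~(a) then transfers the estimate back to the original $V_-$, closing the argument.
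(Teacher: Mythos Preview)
The paper does not prove this lemma directly; it simply cites \cite[Lemma~13]{henheikTc} and invokes \Cref{prop:admissible} to check that admissible potentials satisfy the hypotheses there. Your proposal is therefore an attempt to reproduce the argument of \cite{henheikTc} rather than a comparison with anything in the present paper. The overall architecture you describe---splitting on whether $|\cdot|^{-2}V\in L^1$, using the subharmonicity argument in the regular case, and exploiting $s_-^*<s_+^*$ together with assumption~(a) in the singular case---is exactly the right skeleton and matches how the assumptions in \Cref{def:admissible} are designed to be used.

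There is, however, one concrete error in the singular regime. You invoke \Cref{lem:3} for $V_+$ and claim the resulting exponent satisfies $\beta_+^*>s^*/2$. This fails whenever $s^*\in(19/11,2)$: in that range $s^*/2>19/22$, while $\beta^*(\cdot)$ is bounded above by $19/22$ (attained at and beyond $s^*=2$), so $\beta_+^*\le 19/22<s^*/2$ regardless of how large $s_+^*$ is. The uniform-in-$\ell$ bound of \Cref{lem:3} is simply too weak here. The fix is immediate: since you only need $\ell=0$, use $j_0(t)^2\le\min(1,t^{-2})$ directly to get
\[
e_\mu^+\le C\int_{|x|<\mu^{-1/2}}V_+(x)\,\D x+\frac{C}{\mu}\int_{|x|\ge\mu^{-1/2}}\frac{V_+(x)}{|x|^2}\,\D x\le C_s\,\mu^{-s/2}
\]
for any $s<s_+^*$, which is the bound you actually need.

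A second point to watch: your claim that monotonicity of the rearranged profile upgrades the subsequential lower bound on $\int_{B_\varepsilon}V_-|_{B_a}^*$ to a pointwise one is plausible but not obvious. Monotonicity of $W=V_-|_{B_a}^*$ gives that $\varepsilon\mapsto\varepsilon^{-3}\int_{B_\varepsilon}W$ is decreasing, i.e.\ a doubling estimate $G(2\varepsilon)\le 8G(\varepsilon)$, but this alone does not force $\liminf_{\varepsilon\to 0}\varepsilon^{-r}G(\varepsilon)>0$ for every $r>r^*_W$. You will need to use more of the structure (the divergence of $\int r^{2-s}W(r)\,\D r$ for $s>s^*$ combined with the pointwise bound $W(\varepsilon)\le 3G(\varepsilon)/(4\pi\varepsilon^3)$ is the natural route), so the ``short layer-cake argument'' deserves to be written out carefully.
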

\begin{proof} 
	This is immediate from Lemma 13 in \cite{henheikTc} by invoking Proposition \ref{prop:admissible}. 
\end{proof}
\noindent 
An upper bound is trivially obtained as $\vert e_\mu \vert \le C_\delta \mu^{-\min(s^*-\delta,2)/2}$ for any $\delta >0$ 
by definition of $s^*$ in \Cref{eq:defsstar} (see also \Cref{eq:sVbound}). 
Note that both, upper and lower bound, remain true if we replace the exponent with $\min(s^*,2)/2 \pm \delta$,
i.e.
$c_\delta \mu^{-\min(s^*,2)/2 - \delta}\leq |e_\mu| \leq C_\delta \mu^{-\min(s^*,2)/2 + \delta}$.
This is the formulation we will use.

Beside these two Lemmas, we will use the following observation: It can easily be checked (see Lemma 3 in \cite{hs081}) 
that the operator $E_{\Delta,\mu}(p) + V(x)$ has $0$ as its lowest eigenvalue, and that $\alpha$ is the (unique) eigenvector with this eigenvalue. 
By employing the Birman--Schwinger principle (see \cite{fhns, hhss, hs15}), 
this is equivalent to the fact that the Birman--Schwinger operator
\begin{equation*}
B_{\Delta,\mu} = V^{1/2} \frac{1}{E_{\Delta,\mu}} \vert V \vert^{1/2}
\end{equation*}
has $-1$ as its lowest eigenvalue with $V^{1/2}\alpha$ being the corresponding (unique) eigenvector. 
Here we used the notation $V(x)^{1/2} = \mathrm{sgn}(V(x)) \vert V(x)\vert^{1/2}$. 
In the following we need a convenient decomposition of $B_{\Delta, \mu}$ in a dominant singular term and other error terms. 
For this purpose we let ${\mathfrak{F}_{\mu}} : L^1(\R^3) \to L^2(\Sph^{2})$ denote the (rescaled) Fourier transform restricted to $\Sph^{2}$ with 
\begin{equation*}
\left(\mathfrak{F}_{\mu} \psi \right)(p) = \frac{1}{(2\pi)^{3/2}} \int_{\R^3} \E^{-\mathrm{i} \sqrt{\mu} p\cdot x} \psi(x) \D x\,,
\end{equation*}
which is well--defined by the Riemann--Lebesgue Lemma. Now, we decompose the Birman--Schwinger operator as
\begin{equation} \label{bsdecomp}
B_{\Delta,\mu} = m_\mu^{(\kappa)}(\Delta) \, V^{1/2} {\mathfrak{F}_{\mu}}^\dagger{\mathfrak{F}_{\mu}} \vert V \vert^{1/2} + V^{1/2}  M^{(\kappa)}_{\Delta,\mu}\vert V \vert^{1/2}  \,,
\end{equation}
where $M_{\Delta,\mu}^{(\kappa)}$ is such that this holds. For the first term, note that $V^{1/2} {\mathfrak{F}_{\mu}} ^\dagger{\mathfrak{F}_{\mu}} \vert V \vert^{1/2}$ is isospectral to $ \mathcal{V}_\mu = {\mathfrak{F}_{\mu}} V {\mathfrak{F}_{\mu}}^\dagger$.   In fact, the spectra agree at first except possibly at $0$, but $0$ is in both spectra as the operators are compact on an infinite dimensional space. This first term in the decomposition~\eqref{bsdecomp} will be the dominant term, which is how the  third equality in Lemma \ref{lem:1} will arise. 

Analogously to the proof of Lemma 14 in \cite{henheikTc} and the proof of Theorem 1 in \cite{hs081}, we further decompose
\begin{equation} \label{bsdecomp2}
V^{1/2}  M^{(\kappa)}_{\Delta,\mu}\vert V \vert^{1/2} 
= V^{1/2}  \frac{1}{p^2 + \kappa^2 \mu }\vert V \vert^{1/2} +  A^{(\kappa)}_{\Delta,\mu} 
=: L^{(\kappa)}_{\mu} + A^{(\kappa)}_{\Delta,\mu}\,,
\end{equation}
where now $A^{(\kappa)}_{\Delta,\mu}$ is such that this holds. 
During the proof of Lemma 14 in \cite{henheikTc} (see the  Equation in the middle of page 15) it was shown that
\begin{equation*}
\left\Vert L^{(\kappa)}_{\mu} \right\Vert_{\mathrm{op}} 
\le C \, \mu^{1/2} \,  \int_{0}^{\infty} \D p \frac{p^2}{p^2 + \kappa^2} \sup_{\ell \in \N_0} \int_{\R^3} \D x \vert V(x)\vert \, \vert j_\ell(\sqrt{\mu} p \vert x \vert ) \vert^2 \, ,
\end{equation*}
which may be bounded by $\mu^{-\beta^* + 1/2 + \delta}$ for any $\delta > 0$ by means of \Cref{lem:3}.
We continue with a bound on the operator norm of $A_{\Delta, \mu}^{(\kappa)}$ by estimating the matrix elements $\langle f \vert A_{\Delta, \mu}^{(\kappa)} \vert g \rangle$ 
for functions $f,g\in L^2(\R^3)$. This computation is analogous to the computation in the proof of Theorem 2 in \cite{henheikTc}.
We give it here for completeness.

Note that, since $V$ is radial, it is enough to restrict to functions of definite angular momentum.
That is, with a slight abuse of notation, functions of the form $f(x) = Y_\ell^m(\hat x)f(|x|)$, 
where $Y_\ell^m$ denotes the spherical harmonics and we write $\hat x = x/|x|$.
The operator $A_{\Delta, \mu}^{(\kappa)}$ is indeed block--diagonal in the angular momentum as will follow from the computations below.
Since functions of definite angular momentum span $L^2(\R^3)$ \cite[Sections 17.6-17.7]{hall} it is thus enough
to bound $\langle f \vert A_{\Delta, \mu}^{(\kappa)} \vert g \rangle$
for $f,g$ of the form $f(x) = Y_\ell^m(\hat x)f(|x|)$, $g(x) = Y_{\ell'}^{m'}(\hat x)g(|x|)$.

Now, $A_{\Delta, \mu}^{(\kappa)}$ has integral kernel
\[
  A_{\Delta, \mu}^{(\kappa)}(x,y) 
    = CV^{1/2}(x)|V(y)|^{1/2} \int_{\R^3} \left(\frac{1}{E_{\Delta, \mu}(p)} - \frac{1}{p^2 + \kappa^2\mu}\right)
      \left(\E^{\I p\cdot(x-y)} - \E^{\I \sqrt{\mu}\hat p\cdot (x-y)}\right)  \ud p\,.
\] 
Thus, by the radiality of $V$ we get
\begin{equation}\label{eq.A.innerprod.full}
\begin{aligned}
\big\langle f \big\vert A_{\Delta, \mu}^{(\kappa)} \big\vert g \big\rangle
  & = C\int_0^\infty \ud |x| \, |x|^2 V^{1/2}(|x|) \overline{f(|x|)}
    \int_0^\infty \ud |y| \, |y|^2 |V(|y|)|^{1/2} g(|y|)
  \\ & \quad
    \times 
    \int_{0}^\infty \ud |p| \, |p|^2\left(\frac{1}{E_{\Delta, \mu}(|p|)} - \frac{1}{|p|^2 + \kappa^2\mu}\right)
    \int_{\Sph^2} \ud \omega(\hat p)
  \\ & \quad
    \times
    \int_{\Sph^2} \ud \omega(\hat x) \int_{\Sph^2}\ud \omega(\hat y) \, 
    \overline{Y_\ell^m(\hat x)} Y_{\ell'}^{m'}(\hat y) \left(\E^{- \I p\cdot(x-y)} - \E^{- \I \sqrt{\mu}\hat p\cdot (x-y)}\right)\,.
\end{aligned}
\end{equation}
Now, using the plane--wave expansion 
$\E^{\I p\cdot x} = 4\pi \sum_{\ell=0}^{\infty} \sum_{m= -\ell}^{\ell} \I^\ell j_\ell(|p||x|) Y_{\ell}^m(\hat p) \overline{Y_\ell^m(\hat x)}$, 
the spherical integrations in $x$ and $y$ may be evaluated as
\[
  16\pi^2 (-\I)^{\ell + \ell'}\left(j_\ell(|p||x|) j_{\ell'}(|p||y|) - j_\ell(\sqrt{\mu}|x|)j_{\ell'}(\sqrt{\mu}|y|)\right) \overline{Y_{\ell}^m(\hat p)} Y_{\ell'}^{m'}(\hat p)
\]
using the orthogonality of the spherical harmonics.
The spherical $p$--integral of this gives a factor $\delta_{\ell \ell'}\delta_{m m'}$ again by orthogonality of the spherical harmonics.
(This shows that $A_{\Delta, \mu}^{(\kappa)}$ is block--diagonal in the angular momentum as claimed.)
We may thus restrict to the case of $\ell = \ell'$ and $m=m'$. Hereinafter, we will write $x$, $y$, and $p$ instead of $|x|$, $|y|$, and $|p|$.

Recall the following bounds on spherical Bessel functions
\[
  \sup_{\ell\in \N_0} \sup_{x\geq 0} |j_\ell(x)| \leq 1\,,
  \qquad
  \sup_{\ell\in \N_0} \sup_{x\geq 0} |j_\ell'(x)| \leq 1\,,
  \qquad
  \sup_{\ell\in \N_0} \sup_{x\geq 0} x^{5/6}|j_\ell(x)| \leq C\,,
\]
where the first one is elementary, the second one follows from \cite[Eq.~10.1.20]{abramowitz}, and the third one may be found in \cite[Eq.~1]{landau} 
(see also Proposition 16 in \cite{henheikTc}).
Adding $\pm j_\ell(px)j_\ell(\sqrt{\mu}y)$ and using these bounds we may estimate
for any $0 < \eps < 5/11$
\begin{equation}\label{eq.bessel.bound}
\begin{aligned}
& \left\vert j_\ell(px) j_{\ell}(py) - j_\ell(\sqrt{\mu}x)j_{\ell}(\sqrt{\mu}y)\right\vert
\\ & \qquad 
  \leq C |p - \sqrt{\mu}|^\eps \left(p^{-\eps} + (\sqrt{\mu})^{-\eps}\right)
  \left(|j_\ell(px)|^{1-11\eps/5} + |j_\ell(\sqrt{\mu}x)|^{1-11\eps/5}\right) 
\\ & \qquad \quad \times
  \left(|j_\ell(py)|^{1-11\eps/5} + |j_\ell(\sqrt{\mu}y)|^{1-11\eps/5}\right).
\end{aligned}
\end{equation}
The radial $p$--integral in \Cref{eq.A.innerprod.full} is then (a constant times) 
\begin{equation}\label{eq.radial.p.int}
  \int_{0}^\infty \ud p \left(\frac{1}{E_{\Delta, \mu}(p)} - \frac{1}{p^2 + \kappa^2\mu}\right) 
    \left(j_\ell(px) j_{\ell}(py) - j_\ell(\sqrt{\mu}x)j_{\ell}(\sqrt{\mu}y)\right)
\end{equation}
Using \Cref{eq.bessel.bound} and changing integration variable $p \rightarrow \sqrt{\mu}p$ we get 
\[
\begin{aligned}
  |\eqref{eq.radial.p.int}|
  & \leq C\mu^{1/2} \int_0^\infty \ud p \, p^2
    \left\vert \frac{1}{\sqrt{(p^2 - 1)^2 + \vert \Delta(\sqrt{\mu} p) /\mu \vert^2 }}- \frac{1}{p^2 + \kappa^2 }\right\vert 
    \vert p-1\vert^{\epsi} \left(\frac{1}{p^\epsi}+ 1\right)
  \\
  & \quad \times 
    \left(|j_\ell(\sqrt{\mu}px)|^{1-11\eps/5} + |j_\ell(\sqrt{\mu}x)|^{1-11\eps/5}\right) 
    \left(|j_\ell(\sqrt{\mu}py)|^{1-11\eps/5} + |j_\ell(\sqrt{\mu}y)|^{1-11\eps/5}\right).
\end{aligned}
\]
Plugging this into \Cref{eq.A.innerprod.full} and using Hölder for the $x$-- and $y$--integrations we thus get
\[
\begin{aligned}
&\big\vert \big\langle f \big\vert A_{\Delta, \mu}^{(\kappa)} \big\vert g \big\rangle \big\vert
\\ & \quad \leq 
C \mu^{1/2} \int_{0}^{\infty} \D p \, p^2 \left\vert \frac{1}{\sqrt{(p^2 - 1)^2 + \vert \Delta(\sqrt{\mu} p) /\mu \vert^2 }}- \frac{1}{p^2 + \kappa^2 }\right\vert 
  \vert p-1\vert^{\epsi} \left(\frac{1}{p^\epsi}+ 1\right) 
  \\
& \qquad \times \int_{\R^3} \D x \,  \vert 
  V (x)\vert\left(\vert j_\ell(\sqrt{\mu}p\vert x \vert) \vert^{2-22\epsi/5} + \vert j_\ell(\sqrt{\mu}\vert x \vert) \vert^{2-22\epsi/5}\right)\,,
\end{aligned}
\]
where we changed back to $x$ denoting a vector in $\R^3$.
By \Cref{lem:3} we may bound the $x$--integral by $\mu^{-\beta^* + \delta}(1 + p^{-\beta^* + \delta})$ for any $\delta > 0$.
Also, $\Vert \Delta \Vert_{L^\infty} = o(\mu)$ by \Cref{prop.delta.lipschitz}.
Hence the $p$--integral will be finite uniformly in $\mu$ for $\mu$ large enough.
We conclude that 
\[
  \norm{A_{\Delta, \mu}^{(\kappa)}}_{\textnormal{op}} \leq C \mu^{-\beta^* + 1/2 + \delta}
\]
for any $\delta > 0$ and for $\mu$ large enough.
Combining this with the bound on $\Vert L_\mu^{(\kappa)}\Vert_{\mathrm{op}}$ from above, we get

\begin{equation} \label{eq:Mbound}
\limsup\limits_{\mu \to \infty} \mu^{\beta^*-1/2-\delta} \left\Vert V^{1/2} M_{\Delta,\mu}^{(\kappa)} \vert V \vert^{1/2} \right\Vert_{\mathrm{op}} =  0
\end{equation}
for any $\delta >0$. 
Also, since
$V^{1/2} \mathfrak{F}_\mu^\dagger\mathfrak{F}_\mu \vert V \vert^{1/2}$ 
is isospectral to $\mathcal{V}_\mu$, so its eigenvalues are given by \Cref{eq:eigenvalues}, one can easily see, using Lemma \ref{lem:3} again, that
\begin{equation} \label{eq:FVFbound}
\limsup\limits_{\mu \to \infty} \mu^{\beta^*-\delta} \left\Vert V^{1/2} \mathfrak{F}_\mu^\dagger
\mathfrak{F}_\mu \vert V \vert^{1/2}\right\Vert_{\mathrm{op}} = 0\,,
\end{equation}
for any $\delta >0$. Finally, by definition of $s^*$ (see \Cref{eq:defsstar}), we get for any $\delta >0$ that
\begin{equation} \label{eq:sVbound}
    \limsup\limits_{\mu \to \infty} \mu^{\min(s^*,2)/2-\delta}\int_{\R^3} \vert V(x)\vert \left( \frac{\sin(\sqrt{\mu} \vert x \vert)}{\sqrt{\mu}\vert x \vert}\right)^2 \D x = 0\,.
\end{equation}
As the last ingredient we need the following Lemma, which provides a bound controlling $\Delta(p)$ in terms of $\Delta (\sqrt{\mu})$. Its proof is given in Section \ref{sec:technical}. 

\begin{lem}
	\label{prop.delta.lip.weak.coupling}
	Suppose $s^* > 1$ and
	let $u(p) = (4\pi)^{-1/2}$ be the constant function on the sphere $\Sph^2$ and let 
	\[
	\hat \varphi(p) = \sqrt{4\pi}\mathfrak{F}V\mathfrak{F}_\mu^\dagger u(p)
	= \frac{1}{(2\pi)^{3/2}} \int_{\Sph^2} \hat V(p - \sqrt{\mu} q) \ud\omega(q)\,,
	\]
	where $\mathfrak{F}$ denotes the usual Fourier transform. 
  Then 
	\[
	\Delta(p) 
	= f(\mu)\left[\hat \varphi(p) + \eta_\mu(p)\right],
	\]
	for some function $f(\mu)$. 
  The function $\eta_\mu$ satisfies
	\begin{equation*}
	\limsup\limits_{\mu \to \infty} \mu^{\beta^* + \min(s^*,2)/4 - 1/2-\delta}\norm{\eta_\mu}_{L^\infty} = 0 \hspace{2.42mm} \text{and} \hspace{2.42mm} \limsup\limits_{\mu \to \infty} \mu^{\beta^* + \min(s^*,2)/2 - 1/2-\delta} \abs{\eta_\mu(\sqrt{\mu})} = 0
	\end{equation*}
	for any $\delta > 0$. 
\end{lem}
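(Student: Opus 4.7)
The starting point is the Birman--Schwinger reformulation: $\psi := V^{1/2}\alpha$ satisfies $B_{\Delta,\mu}\psi=-\psi$, which via the decomposition \eqref{bsdecomp} reads
\[
  \psi
  =
  -\,m_\mu^{(\kappa)}(\Delta)\, V^{1/2}\mathfrak{F}_\mu^\dagger\mathfrak{F}_\mu\lvert V\rvert^{1/2}\psi
  \;-\; V^{1/2} M^{(\kappa)}_{\Delta,\mu}\lvert V\rvert^{1/2}\psi.
\]
The key radial identity is that $\mathfrak{F}_\mu\lvert V\rvert^{1/2}\psi = \mathfrak{F}_\mu(V\alpha)$ equals $\widehat{V\alpha}$ evaluated on the sphere of radius $\sqrt{\mu}$; since $\Delta=-2\widehat{V\alpha}$ is radial, this is the constant function $-\tfrac{1}{2}\Delta(\sqrt{\mu})\cdot\mathbbm{1} = -\sqrt{\pi}\,\Delta(\sqrt{\mu})\,u$ on $\Sph^2$. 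Multiplying the previous display pointwise by $\lvert V\rvert^{1/2}$ (so that $\lvert V\rvert^{1/2}V^{1/2}=V$) and then applying the ordinary Fourier transform $\mathfrak{F}$, using $\mathfrak{F}V\mathfrak{F}_\mu^\dagger u=\hat\varphi/\sqrt{4\pi}$, yields
\[
  \Delta(p)
  \;=\;
  \underbrace{-\,m_\mu^{(\kappa)}(\Delta)\,\Delta(\sqrt{\mu})}_{=:\,f(\mu)}\,\hat\varphi(p)
  \;+\; 2\,\mathfrak{F}\!\left[V M^{(\kappa)}_{\Delta,\mu}\lvert V\rvert^{1/2}\psi\right]\!(p),
\]
which identifies $f(\mu)$ and $\eta_\mu(p) = \frac{2}{f(\mu)}\mathfrak{F}[V M^{(\kappa)}_{\Delta,\mu}\lvert V\rvert^{1/2}\psi](p)$.

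The next step fixes a normalization. Taking $L^2$--norms in the Birman--Schwinger equation written as $\psi = \sqrt{\pi}\,\Delta(\sqrt{\mu})\,m_\mu^{(\kappa)}(\Delta)\,V^{1/2}\mathfrak{F}_\mu^\dagger u - V^{1/2}M^{(\kappa)}_{\Delta,\mu}\lvert V\rvert^{1/2}\psi$ and absorbing the second term into the left-hand side (which is possible because $\|V^{1/2}M^{(\kappa)}_{\Delta,\mu}\lvert V\rvert^{1/2}\|_{\mathrm{op}}\to0$ by \eqref{eq:Mbound}) gives
\[
  \|\psi\|_{L^2} \;\le\; C\,\lvert f(\mu)\rvert\,\sqrt{N_\mu},
  \qquad
  N_\mu := \|V^{1/2}\mathfrak{F}_\mu^\dagger u\|_{L^2}^2
  = C\!\int \lvert V(x)\rvert\!\left(\tfrac{\sin(\sqrt{\mu}\lvert x\rvert)}{\sqrt{\mu}\lvert x\rvert}\right)^{\!2}\!\ud x,
\]
and by \eqref{eq:sVbound} one has $N_\mu\le C\mu^{-\min(s^*,2)/2+\delta}$ for every $\delta>0$.

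For the $L^\infty$--bound on $\eta_\mu$, use $\|\hat g\|_{L^\infty}\le C\|g\|_{L^1}$ and Cauchy--Schwarz applied to $V=V^{1/2}\cdot\lvert V\rvert^{1/2}$:
\[
  \left\|V M^{(\kappa)}_{\Delta,\mu}\lvert V\rvert^{1/2}\psi\right\|_{L^1}
  \le \|V\|_{L^1}^{1/2}\,\|V^{1/2}M^{(\kappa)}_{\Delta,\mu}\lvert V\rvert^{1/2}\|_{\mathrm{op}}\,\|\psi\|_{L^2}
  \le C\mu^{-\beta^*+1/2+\delta}\,\lvert f(\mu)\rvert\,\sqrt{N_\mu},
\]
so that $\|\eta_\mu\|_{L^\infty}\le C\mu^{-\beta^*+1/2+\delta}\sqrt{N_\mu}\le C\mu^{-\beta^*-\min(s^*,2)/4+1/2+\delta}$, which is the first claimed bound. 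For the sharper bound at the single point $\lvert p\rvert=\sqrt{\mu}$, note that by radiality $\mathfrak{F}[V M^{(\kappa)}_{\Delta,\mu}\lvert V\rvert^{1/2}\psi](\sqrt{\mu}\,\hat e)$ is the constant function on $\Sph^2$ obtained by restriction, and can be written as $\frac{1}{\sqrt{4\pi}}\langle u, \mathfrak{F}_\mu V M^{(\kappa)}_{\Delta,\mu}\lvert V\rvert^{1/2}\psi\rangle_{L^2(\Sph^2)}$. Transferring $\mathfrak{F}_\mu^\dagger u$ to the other side of the inner product produces the factor $\mathfrak{F}_\mu^\dagger u(x)\propto j_0(\sqrt{\mu}\lvert x\rvert)$, so Cauchy--Schwarz now yields the extra factor $\sqrt{N_\mu}$, giving
\[
  \lvert\eta_\mu(\sqrt{\mu})\rvert
  \;\le\; C\,\lvert f(\mu)\rvert^{-1}\sqrt{N_\mu}\,\|V^{1/2}M^{(\kappa)}_{\Delta,\mu}\lvert V\rvert^{1/2}\|_{\mathrm{op}}\|\psi\|_{L^2}
  \;\le\; C\,N_\mu\,\mu^{-\beta^*+1/2+\delta},
\]
which is the second claimed bound.

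\textbf{Main obstacle.} The principal delicate point is that the raw estimate on $\|V M^{(\kappa)}_{\Delta,\mu}\lvert V\rvert^{1/2}\psi\|_{L^1}$ does not see the decay of the Fourier transform on the Fermi sphere; one must exploit that $\eta_\mu(\sqrt{\mu})$ is a \emph{sphere} evaluation rather than a generic $L^\infty$--value, and represent it as the pairing with the constant mode $u$ in order to bring the sinc--weight and hence the additional factor $\sqrt{N_\mu}$. A secondary technicality is the careful bookkeeping of signs $V^{1/2}=\sgn(V)\lvert V\rvert^{1/2}$ when converting between the Birman--Schwinger side and the Fourier side, together with the normalization--invariance of $\eta_\mu$ which ensures that the bounds do not depend on the (a priori awkward) size of $\Delta(\sqrt{\mu})$.
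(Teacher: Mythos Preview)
Your proof is correct and follows essentially the same approach as the paper: both exploit the Birman--Schwinger decomposition \eqref{bsdecomp}, identify the leading term via the constant spherical mode $u$, and obtain the sharper bound on $\eta_\mu(\sqrt\mu)$ by averaging over $\Sph^2$ to pick up the extra factor $\||V|^{1/2}\mathfrak{F}_\mu^\dagger u\|_{L^2}=\sqrt{N_\mu}$. The only organizational difference is that the paper invokes the resolvent factorization $(1+V^{1/2}M_{\Delta,\mu}^{(\kappa)}|V|^{1/2})^{-1}$ (borrowed from the proof of \Cref{prop:5}) to write $\eta_\mu$ directly in terms of $u$, whereas you work with the eigenvector $\psi=V^{1/2}\alpha$ and recover the same information via the normalization bound $\|\psi\|_{L^2}\le C|f(\mu)|\sqrt{N_\mu}$; these are equivalent since $\psi$ is (up to a constant) precisely $(1+V^{1/2}M_{\Delta,\mu}^{(\kappa)}|V|^{1/2})^{-1}V^{1/2}\mathfrak{F}_\mu^\dagger u$.
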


\noindent Note that $\hat \varphi(\sqrt\mu) = \sqrt{4\pi}\mathfrak{F}_\mu V\mathfrak{F}_\mu^\dagger u(1) = e_\mu$. 
Now, combining this with Lemmas \ref{lem:3} and \ref{lem:4}, we see that
$\Delta(\sqrt{\mu}) = f(\mu) e_\mu (1 + o(1))$, from which we conclude that
\begin{equation*}
\Delta(p) = \frac{\hat\varphi(p) + \eta_\mu(p)}{e_\mu + \eta_\mu(\sqrt{\mu})} \Delta(\sqrt{\mu})
= \left[1
+ \frac{\hat\varphi(p) - \hat\varphi(\sqrt{\mu})}{e_\mu}
+ \frac{\eta_\mu(p)}{e_\mu}
\right] (1 + o(1)) \Delta(\sqrt{\mu})\,.
\end{equation*}
Now, it is an easy computation to see $|\hat \varphi(p) - \hat\varphi(q)| \leq C \mu^{-1/2} |p-q|$ for all $p,q$. Thus
\begin{equation}
\label{eqn.delta.leq.delta}
|\Delta(p)|  \leq C\left(1 + \mu^{\min(s^*,2)/2-1/2  + \delta}|p-\sqrt{\mu}| 
+ \mu^{\min(s^*,2)/4 - \beta^* +1/2 + \delta}\right) \Delta(\sqrt{\mu})
\end{equation}
for any $\delta >0$, again by means of Lemma \ref{lem:3} and Lemma \ref{lem:4}, assuming that $V$ is admissible. 
So, we get the desired control on $\Delta(p)$ in terms of $\Delta(\sqrt{\mu})$.

The bound on $\eta_{\mu}(\sqrt{\mu})$ is effectively a bound on $\big\langle u \big\vert \mathfrak{F}_\mu^\dagger V M_{\Delta, \mu}^{(\kappa)}V \mathfrak{F}_\mu \big\vert u \big\rangle$.
(This will be clear from the proof.)
For sufficiently large $\mu$ we have
\begin{equation}
\label{eqn.bound.u.m.u}
  \abs{\longip{u}{\mathfrak{F}_\mu^\dagger V M_{\Delta, \mu}^{(\kappa)}V \mathfrak{F}_\mu}{u}} \leq C_\delta \mu^{-\beta^* - \min(s^*, 2)/2 + 1/2 + \delta}
\end{equation}
for any $\delta > 0$. This will be of importance in the perturbation argument in \Cref{prop:5}.

We are now able to prove the second and third equality in Lemma \ref{lem:1}. 
\begin{prop}\label{prop:4} 
Let $V$ be an admissible potential. Then we have
\begin{equation*}
m_\mu^{(\kappa)}(\Delta) = \sqrt{\mu} \left( \log\frac{\mu}{\Delta(\sqrt{\mu})}-2+\kappa \, \frac{\pi}{2} + \log (8) + o(1) \right)
\end{equation*}
in the limit $\mu \to \infty$. 
\end{prop}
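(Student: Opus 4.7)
The plan is to change variables $p = \sqrt\mu q$ in the definition of $m_\mu^{(\kappa)}(\Delta)$ and exploit the radiality of $\Delta$ to rewrite
\[
m_\mu^{(\kappa)}(\Delta) = \sqrt\mu \int_0^\infty q^2 \left(\frac{1}{\sqrt{(q^2-1)^2 + D(q)^2}} - \frac{1}{q^2+\kappa^2}\right) dq,
\]
where $D(q) := \Delta(\sqrt\mu q)/\mu$, so that $D_0 := D(1) = \Delta(\sqrt\mu)/\mu = o(1)$ by Lemma~\ref{prop.delta.lipschitz}. The target identity then reads $m_\mu^{(\kappa)}(\Delta)/\sqrt\mu = \log(1/D_0) - 2 + \kappa\pi/2 + \log 8 + o(1)$, and I would establish it in two steps: (A) replace $D(q)$ in the integrand by the constant $D_0$ at the cost of an $o(1)$ error, and (B) evaluate the resulting constant-gap integral explicitly in the limit $D_0 \to 0$.

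For step (A) the key identity is
\[
\Bigl|\tfrac{1}{\sqrt{a^2+x^2}} - \tfrac{1}{\sqrt{a^2+y^2}}\Bigr| = \tfrac{|x^2-y^2|}{\sqrt{a^2+x^2}\sqrt{a^2+y^2}\bigl(\sqrt{a^2+x^2}+\sqrt{a^2+y^2}\bigr)}
\]
with $a = q^2-1$, $x = D(q)$, $y = D_0$. I would split the $q$-axis into a near-Fermi zone $|q-1| \le \eta$, an intermediate zone $\eta \le |q-1| \le R$, and a far zone $|q-1| \ge R$, with parameters $\eta \to 0$, $R \to \infty$ tuned to $\mu$. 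In the near and intermediate zones, the Lipschitz bound from Lemma~\ref{prop.delta.lipschitz} with $r = 2$ gives $|D(q) - D_0| \le C \mu^{-3/20}|q-1|$; plugging this into the identity and integrating, the $|q-1|$-decay in the denominator absorbs the linear factor in the numerator and yields an $o(1)$ total contribution. In the far zone, \eqref{eqn.delta.leq.delta} combined with $\Vert\Delta\Vert_\infty = o(\mu)$ shows $D(q)^2 \ll (q^2-1)^2$, so both $1/\sqrt{a^2+D(q)^2}$ and $1/\sqrt{a^2+D_0^2}$ equal $1/|a|$ up to small perturbations whose difference is integrable against $q^2$ and $o(1)$ in total.

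For step (B), I would compute
\[
I(\lambda) = \int_0^\infty q^2 \left(\frac{1}{\sqrt{(q^2-1)^2+\lambda^2}} - \frac{1}{q^2+\kappa^2}\right) dq
\]
at $\lambda = D_0$ via the substitution $q^2-1 = \lambda \sinh\theta$, which converts the first factor to $\tfrac{1}{2}\sqrt{1 + \lambda \sinh\theta}\,d\theta$ on $\theta \in [-\mathrm{arcsinh}(1/\lambda), \infty)$. The lower endpoint contributes $\mathrm{arcsinh}(1/\lambda) = \log(2/\lambda) + o(1)$, which together with the corresponding expansion at the upper end assembles into the $\log(1/\lambda) + \log 8$ piece; the subtraction $q^2/(q^2+\kappa^2) = 1 - \kappa^2/(q^2+\kappa^2)$ regularises the divergence at infinity and produces $\kappa\pi/2$ through $\int_0^\infty \kappa^2/(q^2+\kappa^2)\,dq = \kappa\pi/2$; finally, the constant $-2$ emerges from the smooth contributions away from $q = 1$ (explicit elementary integrals of $q^2/|q^2-1|$ on $[0,1/2]$ and $[3/2,\infty)$ against the subtraction).

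The main obstacle is step (A). The two integrands being compared are individually non-integrable at infinity and exhibit a regularised logarithmic divergence at the Fermi surface, so ensuring that the replacement $D(q) \mapsto D_0$ yields an $o(1)$ correction, rather than a non-vanishing constant shift, is delicate and forces a careful balancing of the cutoff parameters $\eta, R$. This is precisely why the quantitatively sharp Lipschitz exponent $\mu^{-3/20}$ matters, and in turn why the hypothesis $|\cdot|V \in L^2(\R^3)$ is imposed in Definition~\ref{def:admissible}(c) (cf.\ Remark~\ref{rmk:f(s)}).
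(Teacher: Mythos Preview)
Your overall strategy---rescale to $D(q)=\Delta(\sqrt\mu\,q)/\mu$, replace $D(q)$ by the constant $D_0=D(1)$, then evaluate the resulting integral---is exactly the paper's approach, and step~(B) is a routine computation. The gap is in step~(A), in the near--Fermi zone.

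The Lipschitz bound alone does \emph{not} produce $o(1)$ there. After inserting $|D(q)-D_0|\le C\mu^{-3/20}|q-1|$ into your identity and using $\frac{D(q)+D_0}{\sqrt{a^2+D(q)^2}+\sqrt{a^2+D_0^2}}\le 1$ together with $|q-1|\le C|a|$, the best you get is
\[
\left|\frac{q^2}{\sqrt{a^2+D(q)^2}}-\frac{q^2}{\sqrt{a^2+D_0^2}}\right|
\le \frac{C\mu^{-3/20}}{\sqrt{a^2+D_0^2}}\,,
\]
whose integral over $|q-1|\le 1$ is of order $\mu^{-3/20}\log(1/D_0)$. At this point in the argument there is \emph{no} polynomial lower bound on $D_0$ available (the exponential smallness of $\Delta(\sqrt\mu)$ is established only afterwards, in \Cref{prop:5}, and in fact gives $\log(1/D_0)\sim \mu^{\min(s^*,2)/2-1/2}$, which for every admissible $s^*>7/5$ dominates $\mu^{3/20}$). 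So this quantity is not $o(1)$.

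The missing ingredient is precisely the comparison bound \eqref{eqn.delta.leq.delta}, which you invoke only in the far zone (where it is not actually needed---$\|\Delta\|_{L^\infty}=o(\mu)$ already suffices there) but not in the near zone, where it is essential. The paper splits the near--Fermi integral at a polynomially small scale $\rho=\mu^{-N}$: on $[\rho,1]$ the Lipschitz bound gives $C\mu^{-3/20}|\log\rho|=o(1)$; on $[0,\rho]$ one combines Lipschitz with \eqref{eqn.delta.leq.delta}, which bounds $D(q)+D_0$ by a $\mu$--power multiple of $D_0$, so that the remaining integral $\int_0^\rho D_0\big/\big(\sqrt{s^2+D_0^2}(s+\sqrt{s^2+D_0^2})\big)\,\D s$ is uniformly bounded. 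The resulting error is $C\mu^{\min(s^*,2)/4-\beta^*-3/20+1/2+\delta}$, which is $o(1)$ exactly when $s^*>7/5$---this is the true origin of that threshold in Definition~\ref{def:admissible}(c).
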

\begin{proof}
Computing the angular integral, and substituting $s = \pm \frac{p^2 - \mu}{\mu}$ we get
\[
\begin{aligned}
  m_\mu^{(\kappa)}(\Delta)
  = \frac{\sqrt{\mu}}{2}\Bigg[ 
    & \int_0^1 \left(\frac{\sqrt{1-s} - 1}{\sqrt{s^2 + x_-(s)^2}} + \frac{\sqrt{1 + s} - 1}{\sqrt{s^2 + x_+(s)^2}} 
              - \frac{\sqrt{1-s}}{1-s + \kappa^2} - \frac{\sqrt{1+s}}{1+s+\kappa^2}\right) \ud s
  \\ & + \int_0^1  \left(\frac{1}{\sqrt{s^2 + x_+(s)^2}} + \frac{1}{\sqrt{s^2 + x_-(s)^2}}\right) \ud s
  \\ & + \int_1^\infty \left(\frac{\sqrt{1+s}}{\sqrt{s^2 + x_+(s)^2}} - \frac{\sqrt{1+s}}{1+s+\kappa^2}\right) \ud s
    \Bigg],
\end{aligned}
\]
where $x_{\pm}(s) = \frac{\Delta(\sqrt{\mu}\sqrt{1\pm s})}{\mu}$.
Now, using dominated convergence and $\Vert \Delta \Vert_{L^\infty} = o(\mu)$, it is easy to see that
the first and last integrals converge to
\[
  \int_0^1 \left(\frac{\sqrt{1-s} - 1}{s} + \frac{\sqrt{1 + s} - 1}{s} - \frac{\sqrt{1-s}}{1-s + \kappa^2} - \frac{\sqrt{1+s}}{1+s+\kappa^2}\right) \ud s
\]
and
\[
  \int_1^\infty \left(\frac{\sqrt{1+s}}{s} - \frac{\sqrt{1+s}}{1+s+\kappa^2}\right) \ud s\,,
\]
respectively, in the limit $\mu \to \infty$.
For the middle integral we claim that
\begin{equation} \label{eq:seconintegral}
  \int_0^1 \left(\frac{1}{\sqrt{s^2 + x_\pm(s)^2}} - \frac{1}{\sqrt{s^2 + x_\pm(0)^2}}\right) \ud s \to 0 \quad \text{as} \quad \mu \to \infty\,.
\end{equation}
As in \cite{hs081,lauritsen} this is where we need both the Lipschitz--like bound on $\Delta$ (\Cref{prop.delta.lipschitz}) and the bound controlling $\Delta(p)$ in terms of $\Delta(\sqrt\mu)$ (\Cref{eqn.delta.leq.delta}).
In terms of $x_\pm$, \Cref{prop.delta.lipschitz} reads
\begin{equation} \label{eq:lipschitz}
  |x_\pm(s) - x_\pm(0)| \leq C\mu^{-\delta_r} s\,.
\end{equation}
In terms of $x_\pm$, \Cref{eqn.delta.leq.delta} reads
\begin{equation} \label{eq:comparison}
  x_\pm(s) \leq C(1 + \mu^{\min(s^*,2)/2 +\delta} s + \mu^{\min(s^*,2)/4 - \beta^*+ 1/2 +\delta}) x_\pm(0)\,.
\end{equation}
Now, the integrand in \Cref{eq:seconintegral} is bounded by
\[
  \frac{|x_\pm(s)^2 - x_\pm(0)^2|}{\sqrt{s^2 + x_\pm(s)^2}\sqrt{s^2 + x_\pm(0)^2}\left(\sqrt{s^2 + x_\pm(s)^2} + \sqrt{s^2 + x_\pm(0)^2}\right)}\,.
\]
We introduce a cutoff $\rho \in (0,1)$ and compute the integrals $\int_\rho^1$ and $\int_0^\rho$.
For the first integral we have
\[
\begin{aligned}
  & \int_\rho^1 \frac{|x_\pm(s)^2 - x_\pm(0)^2|}{\sqrt{s^2 + x_\pm(s)^2}\sqrt{s^2 + x_\pm(0)^2}\left(\sqrt{s^2 + x_\pm(s)^2} + \sqrt{s^2 + x_\pm(0)^2}\right)} \ud s
  \\  \leq &\ C \mu^{-\delta_r} \int_\rho^1 \frac{1}{s} \frac{x_\pm(s) + x_\pm(0)}{\sqrt{s^2 + x_\pm(s)^2} + \sqrt{s^2 + x_\pm(0)^2}} \ud s
  \\\leq &\ C \mu^{-\delta_r} |\log \rho|\,.
\end{aligned}
\]
which vanishes for any $\rho \gg \exp\left(-\mu^{\delta_r}\right)$, in particular for $\rho = \mu^{-N}$ for suitable $N > 0$, which we choose here.
For the second integral we have
\[
\begin{aligned}
  & \int_0^\rho \frac{|x_\pm(s)^2 - x_\pm(0)^2|}{\sqrt{s^2 + x_\pm(s)^2}\sqrt{s^2 + x_\pm(0)^2}\left(\sqrt{s^2 + x_\pm(s)^2} + \sqrt{s^2 + x_\pm(0)^2}\right)} \ud s
  \\ \leq &\ C \hspace{-0.5mm}\int_0^\rho  \hspace{-2mm}\mu^{-\delta_r} \hspace{-1mm}\left(1 + \mu^{\min(s^*,2)/4 - \beta^*+1/2+ \delta} + \mu^{\min(s^*,2)/2+\delta} s\right) \hspace{-0.5mm}\frac{x_\pm(0)}{\sqrt{x_\pm(0)^2 + s^2}\left(s + \sqrt{x_\pm(0)^2 + s^2}\right)} \ud s
  \\  \leq &\ C \mu^{\min(s^*,2)/4 - \beta^* - \delta_r +1/2+ \delta} \int_0^\rho \frac{x_\pm(0)}{\sqrt{x_\pm(0)^2 + s^2}\left(s + \sqrt{x_\pm(0)^2 + s^2}\right)} \ud s
  \\ \leq &\  C\mu^{\min(s^*,2)/4 - \beta^* - \delta_r + 1/2 + \delta}\,.
\end{aligned}
\]
Note that for $r=2$, we have $\delta_{r=2} = 3/20$ and thus $  \beta^* - \min(s^*,2)/4-1/2+ 3/20 >0$ for any $s^* > 7/5$ 
(see Remark~\ref{rmk:f(s)}). Also, optimizing this expression in the allowed $r$'s gives the assumption $r > f(s^*)$ given in \Cref{rmk:f(s)}. 
Therefore, also this second integral vanishes as desired by choosing $0< \delta < \beta^* - \min(s^*,2)/4- 7/20$. 
We conclude that 
\[
\begin{aligned}
m_\mu^{(\kappa)}(\Delta) = \frac{\sqrt{\mu}}{2}\Bigg[
  & \int_0^1 \left(\frac{\sqrt{1-s} - 1}{s} + \frac{\sqrt{1 + s} - 1}{s} - \frac{\sqrt{1-s}}{1-s + \kappa^2} - \frac{\sqrt{1+s}}{1+s+\kappa^2}\right) \ud s
  \\ & + \int_0^1 \frac{2}{\sqrt{s^2 + \left(\frac{\Delta(\sqrt{\mu})}{\mu}\right)^2}} \ud s
  + \int_1^\infty \left(\frac{\sqrt{1+s}}{s} - \frac{\sqrt{1+s}}{1+s+\kappa^2}\right) \ud s + o(1)
    \Bigg].
\end{aligned}
\]
This may be computed (perhaps most easily by adding and subtracting the 
corresponding integral with $\kappa = 0$)
as
\[
  m_\mu^{(\kappa)} = \sqrt{\mu} \left(\log \frac{\mu}{\Delta(\sqrt{\mu})} - 2 + \log (8) + \kappa \frac{\pi}{2} + o(1)\right).
  \qedhere
\]
\end{proof}

\noindent
We conclude by showing the third equality of Lemma \ref{lem:1}. 
\begin{prop} \label{prop:5}
Let $V$ be an admissible potential. Then 
\begin{equation*} 
\frac{m_\mu^{(\kappa)}(\Delta)}{\sqrt{\mu}} = - \frac{\pi}{2\sqrt{\mu}b_\mu^{(\kappa)}} + o(1)\,.
\end{equation*}
\end{prop}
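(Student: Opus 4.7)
I would derive the claim by combining the Birman--Schwinger identity $-1 = \inf \spec B_{\Delta,\mu}$ (which holds because $V^{1/2}\alpha$ is the unique ground vector of $B_{\Delta,\mu}$ at eigenvalue $-1$) with a first-order perturbation argument around the dominant operator $m_\mu^{(\kappa)}(\Delta)\cdot V^{1/2}\mathfrak F_\mu^\dagger\mathfrak F_\mu|V|^{1/2}$ in the decomposition~\eqref{bsdecomp}. This operator is isospectral to $m_\mu^{(\kappa)}(\Delta)\mathcal V_\mu$; the Perron--Frobenius argument already used for \Cref{prop:admissible} (together with $\hat V\le 0$) identifies the ground state of $\mathcal V_\mu$ as the constant $u_0 = (4\pi)^{-1/2}$ at eigenvalue $e_\mu$. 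Hence the bi-orthogonal pair
\[
  \phi_R := V^{1/2}\mathfrak F_\mu^\dagger u_0, \qquad \phi_L := |V|^{1/2}\mathfrak F_\mu^\dagger u_0,
\]
with $\langle \phi_L\,|\,\phi_R\rangle = \langle u_0|\mathcal V_\mu|u_0\rangle = e_\mu$, provides the correct right/left ground vectors for the dominant operator.

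\textbf{Execution.} I project the eigenvalue identity onto this pair, controlling the transverse component using the operator-norm bounds~\eqref{eq:Mbound}--\eqref{eq:FVFbound} and the lower bound $|e_\mu|\ge c_\delta\mu^{-\min(s^*,2)/2-\delta}$ from \Cref{lem:4}. Standard non-self-adjoint first-order perturbation theory then gives
\[
  -1 \;=\; m_\mu^{(\kappa)}(\Delta)\,e_\mu \;+\; \frac{1}{e_\mu}\bigl\langle u_0\bigm|\mathfrak F_\mu V M^{(\kappa)}_{\Delta,\mu}V\mathfrak F_\mu^\dagger\bigm|u_0\bigr\rangle \;+\; o(\sqrt\mu\,e_\mu).
\]
The pivotal identification is
\[
  \bigl\langle u_0\bigm|\mathfrak F_\mu V M^{(\kappa)}_{\Delta,\mu}V\mathfrak F_\mu^\dagger\bigm|u_0\bigr\rangle \;=\; \bigl\langle u_0\bigm|\mathcal W_\mu^{(\kappa)}\bigm|u_0\bigr\rangle \;+\; o\bigl(\sqrt\mu\,e_\mu^2\bigr),
\]
which I carry out by expanding $M^{(\kappa)}_{\Delta,\mu}$ as in~\eqref{bsdecomp2}, performing the angular integrals via the plane-wave expansion (analogously to the bound on $A_{\Delta,\mu}^{(\kappa)}$ in the excerpt), and recognizing the two pieces of the integrand in~\eqref{eq:defWmu} evaluated on the constant. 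The remainder is controlled by the Lipschitz bound \Cref{prop.delta.lipschitz}, the comparison bound~\eqref{eqn.delta.leq.delta}, the Bessel estimate \Cref{lem:3}, and $\Vert\Delta\Vert_{L^\infty} = o(\mu)$. Substituting back, and invoking that Rayleigh perturbation applied to $\mathcal B_\mu^{(\kappa)} = \frac{\pi}{2}(\mathcal V_\mu - \mathcal W_\mu^{(\kappa)})$ (with the constant again the ground state) yields $(2/\pi) b_\mu^{(\kappa)} = e_\mu - \langle u_0|\mathcal W_\mu^{(\kappa)}|u_0\rangle + o(\cdots)$, one obtains $m_\mu^{(\kappa)}(\Delta) = -\pi/(2\,b_\mu^{(\kappa)}) + o(\sqrt\mu)$, which, divided by $\sqrt\mu$, is the claim.

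\textbf{Main obstacle.} The principal difficulty is the term-by-term matching of the ground-state projection of the Birman--Schwinger correction with $\langle u_0|\mathcal W_\mu^{(\kappa)}|u_0\rangle$. The bound~\eqref{eqn.bound.u.m.u} is only of the same order as $\langle u_0|\mathcal W_\mu^{(\kappa)}|u_0\rangle$ itself, so one cannot use it as a black box; instead the $\mathcal W_\mu^{(\kappa)}$ contribution must be extracted \emph{before} the remainder is estimated, requiring an explicit angular calculation combined with the \emph{a priori} bounds on $\Delta$. A secondary technical point is the non-self-adjointness of the dominant operator, which forces the use of the bi-orthogonal pair $(\phi_L,\phi_R)$ in place of a min-max trial vector.
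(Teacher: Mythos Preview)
Your overall architecture is close to the paper's, but there is a real gap in the step you yourself flag as the ``main obstacle.'' You propose to establish
\[
\bigl\langle u_0\bigm|\mathfrak{F}_\mu V M^{(\kappa)}_{\Delta,\mu}V\mathfrak{F}_\mu^\dagger - \mathcal{W}_\mu^{(\kappa)}\bigm|u_0\bigr\rangle = o\bigl(\sqrt\mu\,e_\mu^2\bigr)
\]
using only \Cref{prop.delta.lipschitz}, the comparison bound~\eqref{eqn.delta.leq.delta}, \Cref{lem:3}, and $\Vert\Delta\Vert_{L^\infty}=o(\mu)$. These inputs are insufficient. Passing from $M^{(\kappa)}_{\Delta,\mu}$ to its $\Delta=0$ version (which is what produces $\mathcal{W}_\mu^{(\kappa)}$) introduces an error near the Fermi sphere governed by $\Delta(\sqrt\mu)/\mu$; the a~priori bound $\Vert\Delta\Vert_{L^\infty}\le C\mu^{11/20}$ gives only $\Delta(\sqrt\mu)/\mu\le C\mu^{-9/20}$, while $\sqrt\mu\,e_\mu^2$ can be as small as $\mu^{-3/2}$ (when $s^*\ge 2$). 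The comparison bound~\eqref{eqn.delta.leq.delta} does not help, since it controls $\Delta(p)$ \emph{in terms of} $\Delta(\sqrt\mu)$ but says nothing about the size of $\Delta(\sqrt\mu)$ itself.

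The paper closes this gap by a preliminary leading-order step that you omit: from~\eqref{bsdecomp} and~\eqref{eq:Mbound} one first reads off $m_\mu^{(\kappa)}(\Delta)\,e_\mu\to -1$, and feeding this into \Cref{prop:4} forces $\Delta(\sqrt\mu)$ to be \emph{exponentially} small in a positive power of $\mu$. Only then, combining exponential smallness with~\eqref{eqn.delta.leq.delta} on $\{|p|\le\mu^N\}$, does one obtain~\eqref{eq:Deltaconv} with error $O(\mu^{-D})$ for every $D$, which is what the identification actually requires. As a secondary point, the paper also avoids your non-self-adjoint bi-orthogonal perturbation entirely: it factors $1+B_{\Delta,\mu}=(1+V^{1/2}M^{(\kappa)}_{\Delta,\mu}|V|^{1/2})(1+\cdots)$ and passes to the \emph{self-adjoint} operator $T_{\Delta,\mu}=m_\mu^{(\kappa)}(\Delta)\,\mathfrak{F}_\mu|V|^{1/2}(1+V^{1/2}M^{(\kappa)}_{\Delta,\mu}|V|^{1/2})^{-1}V^{1/2}\mathfrak{F}_\mu^\dagger$ on $L^2(\Sph^2)$, so that ordinary Rayleigh--Schr\"odinger perturbation applies and no separate gap estimate for the non-self-adjoint dominant operator on $L^2(\R^3)$ is needed to justify the $o(\sqrt\mu\,e_\mu)$ remainder.
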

\begin{proof} 
Recall that, by the Birman--Schwinger principle the lowest eigenvalue of $B_{\Delta,\mu}$ is $-1$. 
Using the decomposition in \Cref{bsdecomp} and the bound in \Cref{eq:Mbound} we get that
\begin{equation*}
-1 = \lim\limits_{\mu \to \infty} m_\mu^{(\kappa)}(\Delta) \, \mathrm{inf\, spec} \left(V^{1/2} {\mathfrak{F}_{\mu}}^\dagger{\mathfrak{F}_{\mu}} \vert V \vert^{1/2} \right) = \lim\limits_{\mu \to \infty} m_\mu^{(\kappa)}(\Delta) e_\mu \,.
\end{equation*} 
Now, since $s^* >7/5$ we have that $\vert \sqrt{\mu}e_\mu \vert \le C\mu^{-2/5}$ by \Cref{lem:3} (recall~\Cref{eq:eigenvalues} and \Cref{eq:sVbound}).
Thus, by \Cref{prop:4} we conclude that $\Delta(\sqrt{\mu})$ is exponentially small (in some positive power of $\mu$)  as $\mu \to \infty$.

To obtain the next order in the expansion of $m_\mu(\Delta)$, 
we note that $1+V^{1/2} M_{\Delta,\mu}^{(\kappa)} \vert V \vert^{1/2}$ is invertible for $\mu$ large enough by means of \Cref{eq:Mbound}. 
We can thus factorize the Birman--Schwinger operator~\eqref{bsdecomp} as 
	\[
1 + B_{\Delta,\mu} = \left(1 + V^{1/2}M_{\Delta, \mu}^{(\kappa)}|V|^{1/2}\right)
\left( 1 + \frac{m_\mu^{(\kappa)}(\Delta)}{1 + V^{1/2}M_{\Delta,\mu}^{(\kappa)}|V|^{1/2}}
V^{1/2} \mathfrak{F}_\mu^\dagger \mathfrak{F}_\mu |V|^{1/2}\right)\,.
\]
Because $B_{\Delta, \mu}$ has $-1$ as its lowest eigenvalue by the Birman--Schwinger principle, we conclude that, for $\mu$ large enough, the self--adjoint operator
\begin{equation*} 
T_{\Delta, \mu} := m_\mu^{(\kappa)}(\Delta) \mathfrak{F}_\mu|V|^{1/2} 
\frac{1}{1 + V^{1/2}M_{\Delta,\mu}^{(\kappa)}|V|^{1/2}}
V^{1/2}\mathfrak{F}_\mu^\dagger 
\end{equation*}
acting on $L^2(\Sph^2)$ has $-1$ as its lowest eigenvalue since it is isospectral to the right--most operator above.
(This follows from the fact that for operators $A,B$ the operators $AB$ and $BA$ have the same spectrum apart from possibly at $0$.
See also the argument around Equation~(33) in \cite{henheikTc} as well as around Equation~(30) and Equation (47) in \cite{hs081}.)

To highest order $T_{\Delta, \mu}$ is  proportional to $\mathcal{V}_\mu$. 
Since the constant function $u(p) = (4\pi)^{-1/2}$ on $\Sph^2$ is the unique eigenvector of $\mathcal{V}_\mu$ with lowest eigenvalue, 
this is true also for $T_{\Delta, \mu}$ whenever $\mu$ is large enough.

To find the lowest eigenvalue (which is $-1$) we expand the geometric series to first order 
and employ first order perturbation theory. This is completely analogous to the arguments in \cite{hs081} and Equation (34) in \cite{henheikTc}. We obtain
  \begin{equation} \label{eq:impleq}
 \frac{1}{\sqrt{\mu}} m_\mu^{(\kappa)}(\Delta) 
  = \frac{-1}{\mu^{1/2} e_\mu 
    - \mu^{1/2} \big\langle u \big\vert \mathfrak{F}_\mu V M_{\Delta,\mu}^{(\kappa)} V \mathfrak{F}_\mu^\dagger \big\vert u \big\rangle 
    + O(\mu^{-3\beta^* + 3/2+\delta})}
 \end{equation}
for any $\delta >0$ (recall Equations \eqref{eq:Mbound}, \eqref{eq:FVFbound} and \eqref{eqn.bound.u.m.u}). 
The error term in \Cref{eq:impleq} is twofold. 
The first part comes from the expansion of the geometric series. 
The second part comes from first order perturbation theory using the bounds 
\[
  \vert \sqrt{\mu} e_\mu \vert \ge c_\delta \mu^{-\min(s^*,2)/2+1/2-\delta }\quad \text{and} \quad  
\big\vert \mu^{1/2}\big\langle u \big\vert \mathfrak{F}_\mu V M_{\Delta,\mu}^{(\kappa)} V \mathfrak{F}_\mu^\dagger \big\vert u \big\rangle\big\vert 
    \le C_\delta \mu^{-\beta^* - \min(s^*, 2)/2 +1 + \delta} 
\]
 for any $\delta >0$ from \Cref{lem:4} and \Cref{eqn.bound.u.m.u}.
 The error from the series expansion is of order $O(\mu^{-3\beta^* + 3/2+\delta})$ and the error from the perturbation argument is of order $O(\mu^{-2\beta^*-\min(s^*,2)/2 + 3/2+\delta})$
 and is hence dominated by the expansion of the geometric series, since $\beta^* \leq \min(s^*,2)/2$.

 Now, we need to show that 
$\mathfrak{F}_\mu V M_{\Delta,\mu}^{(\kappa)} V \mathfrak{F}_\mu^\dagger$ is close to $\mathcal{W}_\mu^{(\kappa)}$, when evaluated in $\longip{u}{\cdots}{u} $. Therefore, considering their difference, we split the involved radial $p$--integral according to $|p| \le \mu^N$ and $|p| > \mu^N$ for some large $N >0$. The second part is clearly bounded by, e.g., $C \mu^{-N/2}$. For the first part, we have 
 $\Delta(p) \le C \mu^N \Delta(\sqrt{\mu})$ by \Cref{eqn.delta.leq.delta}.
 Using this in combination with the fact that $\Delta(\sqrt{\mu})$ is exponentially small, we find, by dominated convergence and Lipschitz continuity of the involved angular integrals (cf.~Equation (35) in \cite{hs081} and Equation (36) in \cite{henheikTc}), that this part is bounded by $C_D \mu^{-D}$ for any $D > 0$. Since $N>0$ was arbitrary, we conclude that
 \begin{equation} \label{eq:Deltaconv}
\big\vert \big\langle u \big\vert \mathfrak{F}_\mu V M_{\Delta,\mu}^{(\kappa)} V \mathfrak{F}_\mu^\dagger - \mathcal{W}_\mu^{(\kappa)} \big\vert u\big\rangle  \big\vert \le C_D \mu^{-D}
 \end{equation}
 for any $D>0$. Thus, by combining \Cref{eqn.bound.u.m.u} and \Cref{eq:Deltaconv} (recall \Cref{eq:Bmu} and \Cref{eq:bmu}) we get
 \begin{equation} \label{eq:Wmubound}
\big\vert \big\langle u \big\vert \mathcal{W}_\mu^{(\kappa)} \big\vert u \big\rangle \big\vert 
 \le C_\delta \mu^{-\beta^* - \min(s^*,2)/2 +1/2 + \delta}
 \end{equation}
 for any $\delta > 0$. (In particular $b_\mu^{(\kappa)} < 0$ for large $\mu$. This was also shown in \cite{henheikTc}.)

 In particular, combining Equations \eqref{eq:impleq}, \eqref{eq:Deltaconv} and \eqref{eq:Wmubound}, 
 we get again by a perturbation theory argument that
 \begin{equation*}
 \frac{1}{\sqrt{\mu}} m_\mu^{(\kappa)}(\Delta) = - \frac{\pi}{2 \sqrt{\mu} b_\mu^{(\kappa)}} + O(\mu^{-3\beta^* +  \min(s^*,2)+1/2+\delta})\,,
 \end{equation*}
 for any $\delta > 0$. Since $3\beta^* -  \min(s^*,2) -1/2 > 0$ 
 we conclude the desired. 
\end{proof}

 \subsection{Proofs of Auxiliary Lemmas} \label{sec:technical}
In this Subsection, we prove the auxiliary Lemmas \ref{lem:minimizer1}, \ref{prop.delta.lipschitz}, and \ref{prop.delta.lip.weak.coupling}. 
\begin{proof}[Proof of Lemma \ref{lem:minimizer1}]
	First we show 
	\begin{equation}\label{bound.alpha.first}
	\norm{\alpha}_{H^1}^2 \leq C \norm{\alpha}_{L^2}^2 + C \mu^{3/2}.
	\end{equation}
	Since $V\in L^{3/2}({\R^3})$ we have by Sobolev's inequality \cite[Thm.~8.3]{liebloss} $\inf \spec \left(\frac{p^2}{2} + V\right) > -\infty$.
	Thus, using $\sqrt{1-4x^2}\leq 1 - 2x^2$ and $\hat \alpha \leq 1/2$ we get
	\begin{align*}
	\mathcal{F}(\alpha)
	& = 
	\frac{1}{2} \int_{\R^3} |p^2 - \mu| \left(1 - \sqrt{1 - 4\hat \alpha(p)^2}\right) \ud p + \int_{\R^3}V(x) |\alpha(x)|^2 \ud x
	\\
	& \geq 
	\int_{\R^3} (p^2 - \mu) \hat \alpha(p)^2 \ud p + \int_{\R^3} V(x)|\alpha(x)|^2 \ud x
	\\
	& =
	\longip{\alpha}{\frac{p^2}{2} + V}{\alpha} + \int_{\R^3} \left(\frac{p^2}{2} - \mu\right) \hat\alpha(p)^2 \ud p
	\\
	&\geq 
	\frac{1}{4}\norm{\alpha}_{H^1}^2 - C \norm{\alpha}_{L^2}^2 + \int_{\R^3} \left(\frac{p^2}{4} - \mu - \frac{1}{4}\right) \hat \alpha(p)^2 \ud p
	\\
	& \geq 
	\frac{1}{4}\norm{\alpha}_{H^1}^2 - C\norm{\alpha}_{L^2}^2 
	- \frac{1}{4} \int_{\R^3} \left[\frac{p^2}{4} - \mu - \frac{1}{4}\right]_{-} \ud p
	\\
	& \geq 
	\frac{1}{4}\norm{\alpha}_{H^1}^2 - C\norm{\alpha}_{L^2}^2 - C \mu^{3/2}\,,
	\end{align*}
	which gives the desired. Now we show that 
	\[
	\norm{\hat \alpha \mathbbm{1}_{\{|p| < t\}}}_{L^2} \leq C\norm{\hat \alpha \mathbbm{1}_{\{|p| > t\}}}_{L^2} + C\mu^{2\delta - 1} \norm{\alpha}_{H^1}^2,
	\]
	for $t = \mu^{\delta}$ and $0 < \delta < 1/2$.
	
	To see this, we split the integrals in the functional $\mathcal{F}$ according to small or large momentum $p$ and compute
	\begin{align*}
	\mathcal{F}(\alpha) &= \frac{1}{2} \int_{\R^3} |p^2 - \mu| \left(1 - \sqrt{1 - 4\hat \alpha(p)^2}\right) \ud p
	+ \int_{\R^3} V(x) |\alpha(x)|^2 \ud x
	\\
	& \geq 
	\int_{|p| < t} |p^2 - \mu| \hat \alpha(p)^2 \ud p
	+ \int_{|p| > t} |p^2 - \mu| \hat \alpha(p)^2 \ud p
	\nopagebreak[4]\\ & \hspace{6cm}+ \frac{1}{(2\pi)^{3/2}} \iint_{\R^3\times \R^3}  \hat \alpha(p) \hat V(p - q) \hat \alpha(q) \ud p \ud q
	\\
	&  \geq
	\mu \norm{\hat \alpha \mathbbm{1}_{\{|p| < t\}}}_{L^2}^2 - \norm{p^2\hat \alpha \mathbbm{1}_{\{|p| < t\}}}_{L^2}^2
	+ \longip{\hat \alpha \mathbbm{1}_{\{|p| > t\}}}{p^2 + V}{\hat \alpha \mathbbm{1}_{\{|p| > t\}}}
	- \mu \norm{\hat \alpha \mathbbm{1}_{\{|p| > t\}}}_{L^2}^2 \\
	& \quad + \frac{1}{(2\pi)^{3/2}} \left[
	\iint_{|p|, |q| < t} \hspace{-3mm} \hat \alpha(p) \hat V(p - q) \hat \alpha(q) \ud p \ud q  + 2 \iint_{|p| < t, |q|>t} \hspace{-3mm}\hat \alpha(p) \hat V(p - q) \hat \alpha(q) \ud p \ud q
	\right].
	\end{align*}
	Note that, again by Sobolev's inequality \cite[Thm.~8.3]{liebloss}, we have
	\[
	\longip{\hat \alpha \mathbbm{1}_{\{|p| > t\}}}{p^2 + V}{\hat \alpha \mathbbm{1}_{\{|p| > t\}}} \geq 
	- C\norm{\hat \alpha \mathbbm{1}_{\{|p| > t\}}}_{L^2}^2\,.
	\]
	Moreover, by application of Young's inequality \cite[Thm.~4.2]{liebloss} we obtain
	\[
	\begin{aligned}
	\int_{|p| < t} \int_{|q| < t} \hat \alpha(p) \hat V(p - q) \hat \alpha(q) \ud p \ud q
	& \geq - \norm{\hat V}_{L^3}\norm{\hat \alpha \mathbbm{1}_{\{|p| < t\}}}_{L^{6/5}}^2
	\\ & 
	\geq - C \left(t^3\right)^{2\cdot(5/6 - 1/2)} \norm{\hat \alpha \mathbbm{1}_{\{|p| < t\}}}_{L^2}^2
	\\ &
	= - C \mu^{2\delta} \norm{\hat \alpha \mathbbm{1}_{\{|p| < t\}}}_{L^2}^2
	\end{aligned}
	\]  
	and
	\[
	\begin{aligned}
	\int_{|p| < t} \int_{|q| > t} \hat \alpha(p) \hat V(p - q) \hat \alpha(q) \ud p \ud q
	& 
	\geq - \norm{\hat \alpha \mathbbm{1}_{\{|p| < t\}}}_{L^1} \norm{\hat V}_{L^3} \norm{\hat \alpha \mathbbm{1}_{\{|p| > t\}}}_{L^{3/2}}
	\\ & 
	\geq - C t^{3/2} \norm{\alpha}_{H^1} \norm{\hat \alpha \mathbbm{1}_{\{|p| < t\}}}_{L^2}
	\\ & 
	= - C \mu^{3\delta/2} \norm{\alpha}_{H^1} \norm{\hat \alpha \mathbbm{1}_{\{|p| < t\}}}_{L^2}\,,
	\end{aligned}
	\]
	where we used that $\norm{\hat g}_{L^{3/2}} \leq C \norm{g}_{H^1}$.
	Thus we arrive at
	\[
	\mathcal{F}(\alpha) \geq c \mu \norm{\hat \alpha \mathbbm{1}_{\{|p| < t\}}}_{L^2}^2
	- C_1 \mu^{3\delta/2} \norm{\alpha}_{H^1} \norm{\hat \alpha \mathbbm{1}_{\{|p| < t\}}}_{L^2}
	-   C_2 \mu \norm{\hat \alpha \mathbbm{1}_{\{|p| > t\}}}_{L^2}^2,
	\]
	where we absorbed all non-leading terms in these. 
	This is a second degree polynomial in $\norm{\hat \alpha \mathbbm{1}_{\{|p| < t\}}}_{L^2}$
	and thus the value of $\norm{\hat \alpha \mathbbm{1}_{\{|p| < t\}}}_{L^2}$ lies between the roots, i.e.
	\[
	\begin{aligned}
	\norm{\hat \alpha \mathbbm{1}_{\{|p| < t\}}}_{L^2}
	& \leq \frac{
		C_1 \mu^{3\delta/2} \norm{\alpha}_{H^1} 
		+ \sqrt{C_1^2\mu^{3\delta} \norm{\alpha}_{H^1}^2 + 4 c \, C_2 \mu^2 \norm{\hat \alpha \mathbbm{1}_{\{|p| > t\}}}_{L^2}^2}
	}{2c\mu}
	\\
	& \leq C \norm{\hat \alpha \mathbbm{1}_{\{|p| > t\}}}_{L^2} + C \mu^{3\delta/2 - 1} \norm{\alpha}_{H^1}.
	\end{aligned}
	\]
	From the estimate
	\[
	\norm{\hat \alpha \mathbbm{1}_{\{|p| > t\}}}_{L^2}^2
	= \int_{|p| > t} \hspace{-2mm}\hat \alpha(p)^2 \ud p
	\leq \int_{|p| > t} \hspace{-2mm}\hat \alpha(p)^2 \frac{1 + p^2}{1 + t^2}\ud p
	\leq \frac{1}{1 + t^2}\norm{\alpha}_{H^1}^2
	\leq C \mu^{-2\delta}\norm{\alpha}_{H^1}^2\,,
	\]
	we conclude that
	\[
	\norm{\alpha}_{L^2}^2
	= \norm{\hat \alpha \mathbbm{1}_{\{|p| < t\}}}_{L^2}^2 + \norm{\hat \alpha \mathbbm{1}_{\{|p| > t\}}}_{L^2}^2
	\leq C \left(\mu^{-2\delta} + \mu^{3\delta - 2}\right) \norm{\alpha}_{H^1}^2.
	\]
	Choosing the optimal $\delta = 2/5$ we get $\norm{\alpha}_{L^2} \leq C \mu^{-2/5} \norm{\alpha}_{H^1}$, which, in combination with \Cref{bound.alpha.first}, yields
	\[
	\norm{\alpha}_{H^1}^2 \leq C \mu^{-4/5} \norm{\alpha}_{H^1}^2 + \mu^{3/2}.
	\]
	Hence $\norm{\alpha}_{H^1} \leq C\mu^{3/4}$
	and thus also $\norm{\alpha}_{L^2} \leq C \mu^{-2/5} \norm{\alpha}_{H^1} \leq C \mu^{7/20}$
	for sufficiently large~$\mu$.
\end{proof}

\noindent We now turn to the proof of Lemma \ref{prop.delta.lipschitz}.

\begin{proof}[Proof of Lemma \ref{prop.delta.lipschitz}]
	Let $t = \frac{5}{2} - \frac{3}{r}$. Then we have
	\[
	\norm{\Delta}_{L^\infty} \leq C \norm{V\alpha}_{L^1} 
	\leq C \norm{V}_{L^{r}} \norm{\alpha}_{L^{{r}'}}
	\leq C \norm{\alpha}_{L^2}^t \norm{\alpha}_{L^6}^{1-t}
	\leq C \mu^{\frac{15-8t}{20}}
	= C \mu^{\frac{24-5r}{20r}}
	\]
	by Sobolev's inequality \cite[Thm.~8.3]{liebloss}.
	For the difference note that $\Delta(p) - \Delta(q)$ is (proportional to) the Fourier transform of 
	$V(x)\left(1 - \E^{\I (p-q)\cdot x}\right) \alpha(x)$.
	Then
	\[
	\norm{V(x)\left(1 - \E^{\I (p-q)x}\right)}_{L^{r}}^{r}
	= \int_{\R^3} |V(x)|^{r} \abs{1 - \E^{\I (p-q)\cdot x}}^{r} \ud x
	\leq C \int_{\R^3} |V(x)|^{r} |p-q|^{r} |x|^{r} \ud x.
	\]
	Using radiality of $\Delta$, the same argument as before gives the desired.
\end{proof}

\noindent Finally, we give the proof of Lemma \ref{prop.delta.lip.weak.coupling}. 

\begin{proof}[Proof of Lemma \ref{prop.delta.lip.weak.coupling}] 
 Recall from the factorization of the Birman--Schwinger operator in the proof of Proposition \ref{prop:5}, that the self--adjoint operator 
	\begin{equation*} 
	m_\mu^{(\kappa)}(\Delta) \mathfrak{F}_\mu|V|^{1/2} 
	\frac{1}{1 + V^{1/2}M_{\Delta,\mu}^{(\kappa)}|V|^{1/2}}
	V^{1/2}\mathfrak{F}_\mu^\dagger 
	\end{equation*}
	acting on $L^2(\Sph^2)$ has $-1$ as its lowest eigenvalue and $u(p) = (4\pi)^{-1/2}$ is the unique eigenvector 
  with lowest eigenvalue for $\mu$ large enough.
  Hence, one can easily see that
	\[
	\frac{1}{1 + V^{1/2}M_{\Delta,\mu}^{(\kappa)}|V|^{1/2}} V^{1/2} \mathfrak{F}_\mu^\dagger u
	\]
	is an eigenvector of $B_{\Delta,\mu}$ for the lowest eigenvalue and thus proportional to $V^{1/2}\alpha$. 
  By expanding 
  $\frac{1}{1+x} = 1 - \frac{x}{1+x}$ we conclude that $\Delta = f(\mu)[\hat \varphi + \eta_\mu]$, where
	\[
	\eta_\mu 
	= - \sqrt{4\pi}\mathfrak{F}|V|^{1/2} 
  \frac{V^{1/2} M_{\Delta,\mu}^{(\kappa)} |V|^{1/2}}{1 + V^{1/2} M_{\Delta,\mu}^{(\kappa)} |V|^{1/2}}
  V^{1/2} \mathfrak{F}_\mu^\dagger u\,,
	\]
	which can easily be bounded as 
	\begin{equation*}
	\norm{\eta_\mu}_{L^\infty} 
  \leq C \norm{V}_{L^1}^{1/2} \norm{V^{1/2} M_{\Delta,\mu}^{(\kappa)} |V|^{1/2}}_{\mathrm{op}} \norm{V^{1/2} \mathfrak{F}_\mu^\dagger u}_{L^2}\,.
	\end{equation*}
	For $\vert p \vert = \sqrt{\mu}$, we first note that $\hat\varphi(\sqrt{\mu}) = \sqrt{4\pi}\mathfrak{F}_\mu V \mathfrak{F}_\mu^\dagger u(1) = e_\mu$.
	Similarly, since $\eta_\mu$ is radial, we have that
	\[
  \begin{aligned}
	\eta_\mu(\sqrt{\mu}) 
	& = \frac{1}{4\pi} \int_{\Sph^2} \eta_\mu(\sqrt{\mu}q) \ud\omega(q)
    = - \longip{u}{\mathfrak{F}_\mu |V|^{1/2} 
    \frac{V^{1/2} M_{\Delta,\mu}^{(\kappa)} |V|^{1/2}}{1 + V^{1/2} M_{\Delta,\mu}^{(\kappa)} |V|^{1/2}}
    V^{1/2} \mathfrak{F}_\mu^\dagger}{u}
  \end{aligned}
	\]
and we can thus bound
	\[
	\abs{\eta_\mu(\sqrt{\mu})}
	\leq C \norm{V^{1/2} M_{\Delta,\mu}^{(\kappa)} |V|^{1/2}}_{\mathrm{op}} \norm{|V|^{1/2} \mathfrak{F}_\mu^\dagger u}_{L^2}^2.
	\]
It remains to check that 
	\[
	\norm{|V|^{1/2} \mathfrak{F}_\mu^\dagger u}_{L^2}^2  = C \int_{\R^3} |V(x)|  \left\vert \int_{\Sph^2} e^{i\sqrt{\mu}p\cdot x} \frac{1}{\sqrt{4\pi}}\ud \omega(p) \right\vert^2 \D x= C \int_{\R^3} |V(x)|\left(\frac{\sin\sqrt{\mu}|x|}{\sqrt{\mu}|x|}\right)^2 \D x\,.
	\]
Now the claim follows by application of Equations \eqref{eq:Mbound} and \eqref{eq:sVbound}.
\end{proof}

\bigskip 

\noindent {\it Acknowledgments.} We are grateful to Robert Seiringer for helpful discussions and many valuable comments on an earlier version of the manuscript. 
J.H. acknowledges partial financial support by the ERC Advanced Grant ``RMTBeyond” No.~101020331.
\bigskip 

\noindent {\it Data Availability.} Data sharing is not applicable to this article as no datasets were generated or analyzed during the current study.

\bigskip 

\noindent {\it Conflicts of Interest.} The authors have no financial or non--financial interests to disclose.

\end{document}